\newcommand{\mypresubsubspace}{\vspace{-6pt}}
\newtheorem{df}[theorem]{Definition}
\newtheorem{propo}[theorem]{Proposition}
\newcommand*{\seq}[2][n]
            {{\ensuremath{#2_{1}, \allowbreak \ldots, \allowbreak #2_{#1}}}}
\newcommand*{\SEQ}[3]
            {{\ensuremath{#1_{#2}, \allowbreak \ldots, \allowbreak #1_{#3}}}}
\newcommand*{\SEQC}[3]
            {{\ensuremath{#1_{#2} \cdots #1_{#3}}}}
\newcommand*{\partto}{\hookrightarrow}
\newcommand{\restrict}[2]{#1\,{\rule[-1.15ex]{.3pt}{3ex}}\,_{ #2}}
\newcommand*{\vars}{{\ensuremath{\it vars}}}
\newcommand*{\mydash}{{\mbox{\tt-}}}
\newcommand*{\HU}{{\ensuremath{\cal{H U}}}\xspace}
\newcommand*{\HB}{{\ensuremath{\cal{H B}}}\xspace}
\newcommand*{\TU}{{\ensuremath{\cal{T U}}}\xspace}
\newcommand*{\TB}{{\ensuremath{\cal{T B}}}\xspace}
\newcommand*{\M}{{\ensuremath{\cal M}}\xspace}
\renewcommand*{\S}{{\ensuremath{\mathscr S}}\xspace}
\newcommand*{\NN}{{\ensuremath{\mathbb{N}}}\xspace}
\DeclareRobustCommand{\mydate}
  {\ifthenelse{\boolean{commentsaon}}{\makebox[0pt][l]{\quad Draft, \today}}{}}
\author{W{\l}odzimierz Drabent
}
\title{On completeness of logic programs%
\thanks{Accepted, without the appendix,
     for post-conference proceedings of LOPSTR 2014
    (24th International Symposium on Logic-Based Program Synthesis and
    Transformation. Canterbury, UK, September 2014), to appear in 2015 
    in Springer LNCS.
}
} %
\institute{Institute of Computer Science,
         Polish Academy of Sciences,
         and
	\\
         IDA, Link\"opings universitet, Sweden
 \\     {\tt drabent\,{\it at}\/\,ipipan\,{\it dot}\/\,waw\,{\it dot}\/\,pl}
 \\
[2.5ex]November 10, 2014\vspace*{-7ex}
  }
\begin{document}

\maketitle

\begin{abstract}
Program correctness (in imperative and functional programming) splits in
logic programming into correctness and completeness.  Completeness means that
a program produces all the answers required by its specification.  Little
work has been devoted to reasoning about completeness.  This paper presents a
few sufficient conditions for completeness of definite programs.  We also
study preserving completeness under some cases of pruning of SLD-trees
(e.g. due to using the cut).

We treat logic programming as a declarative paradigm, abstracting from any
operational semantics as far as possible.  We argue that the proposed methods
are simple enough to be applied, possibly at an informal level, in practical
Prolog programming.
We point out importance of approximate specifications.

 \vspace{-2.5ex}
\end{abstract}

\paragraph{{\small\bf Keywords:}}
logic programming,
program completeness,
declarative programming,
approximate specification.

\section{Introduction}

The notion of partial program correctness splits in logic programming into
correctness and completeness.
Correctness means that all answers of the program are compatible with the specification,
completeness -- that the program  produces all the answers required by the
specification. 

In this paper we consider definite clause programs, and present a few
sufficient conditions for their completeness.  We also discuss preserving
completeness under pruning of SLD-trees (by e.g.\ using the cut).
We are interested in declarative reasoning, i.e.\ abstracting from any
operational semantics, and treating program clauses as logical formulae.
Our goal is simple methods, which may be applied -- possibly informally -- in
actual practical programming.

\paragraph{Related work.}
Surprisingly little work was devoted to proving completeness of programs.
Hogger \cite{hogger.book} defines the notion of completeness, but does not
provide any sufficient conditions.  
Completeness is not discussed in the important monograph
\cite{Apt-Prolog}.
Instead, a characterization is studied of the set of computed instances of an
atomic query, in a special case when the set is finite and the answers are
ground. 
In the paper 
\cite{kowalski85shorter} of Kowalski
completeness is discussed, but the example proofs concern
only correctness.  
As a sufficient condition for completeness of a program $P$ 
he suggests
$P\vdash T_S$, where $T_S$ is a specification in a form of a logical theory.
The condition seems
impractical as it fails when $T_S$ contains auxiliary
predicates, not 
occurring in $P$. 
It also requires that all the models of $P$ (including the Herbrand base)
are models of the specification.  
But it seems that such specifications often have a
substantially restricted class of models, maybe a single Herbrand
model, cf.\ \cite{Deville}.

Deville \cite{Deville} provides an approach where correctness and completeness
of programs should follow from construction.
No direct sufficient criteria for completeness, applicable to arbitrary programs, are given.
Also the approach is not 
declarative, as it is based on an operational semantics of SLDNF-resolution.

St\"{a}rk
\cite{DBLP:journals/jlp/Stark98}
presents an elegant method of reasoning about a broad class of properties of
programs with negation,
executed under LDNF-resolutions.  
A tool to verify proofs mechanically was provided.
The approach involves a rather complicated induction scheme, so 
it seems impossible to apply the method informally by programmers.
Also, the approach is not fully declarative, as the order of literals in
clause bodies is important.

A declarative sufficient condition for program completeness
was given by Deransart and Ma{\l}uszy\'nski \cite{Deransart.Maluszynski93}.
The approach presented here stems from
\cite{DBLP:journals/tplp/DrabentM05shorter},
the differences are discussed in
the full version of this paper  \cite{drabent.report2014}.
The main contribution since the former version \cite{drabent12.iclp.shorter}
is proving completeness of pruned SLD-trees.
The author is not aware of any other work on this issue.

\paragraph{Preliminaries.}
We use the standard notation and definitions \cite{Apt-Prolog}.
An atom whose predicate symbol is $p$ will be called
a {\em$p$-atom} (or an {\em atom for $p$}).  Similarly, a clause
whose head is a $p$-atom is a {\em clause for $p$}.  
In a program $P$, 
by {\em procedure $p$} we mean the set of the clauses for $p$ in $P$.

We assume a fixed alphabet with an infinite set of function symbols.
The Herbrand universe will be denoted by \HU, the Herbrand base by \HB,
and the sets of all terms, respectively atoms, by \TU and \TB.
For an expression (a program) $E$
by $ground(E)$ we mean the set of ground instances
of $E$ (ground instances of the clauses of $E$).
$\M_P$ denotes the least Herbrand model of a program $P$.

By ``declarative'' (property, reasoning, \ldots) 
we mean referring only to logical reading of programs, 
thus abstracting from any operational semantics.
In particular, 
properties depending on the order of atoms in clauses will not be considered
declarative (as they treat equivalent conjunctions differently).

  By a computed (respectively correct) answer for a program $P$ and a
  query $Q$ we mean an instance $Q\theta$ of $Q$ where $\theta$ is a computed
  (correct) answer substitution \cite{Apt-Prolog} for $Q$ and $P$.
  We often say just
   {\em answer}
 as each computed answer is a correct one, 
  and each correct answer (for $Q$) is a computed answer
  (for $Q$ or for some its instance).
  Thus, by soundness and completeness of SLD-resolution,
  $Q\theta$ is an answer for $P$ iff $P\models Q\theta$.

  Names of variables begin with an upper-case letter.
  We use the list notation of Prolog.  So 
  $[\seq t]$  ($n\geq0$) stands for the list of elements $\seq t$.
  Only a term of this form is considered a list.
 (Thus terms like $[a,a|X]$, or  $[a,a|a]$, where $a$ is a constant,
  are not lists).
The set of natural numbers will be denoted by \NN;
 $f\colon A\partto B$  states that $f$ is a partial function from $A$ to $B$.

The next section introduces the basic notions of specifications, correctness
and completeness. 
Also, advantages of approximate specifications are discussed.
After a brief overview of proving correctness, 
we discuss proving program completeness.
Sect.\,\ref{sec:pruning} deals with proving that completeness is preserved under
pruning.  
We finish with a discussion.
For missing proofs, more examples etc 
see \cite{drabent.report2014}.

\section{Correctness and completeness}

\subsubsection{Specifications.}
The purpose of a logic program is to compute a relation, or a few relations.
A specification should describe these relations.  It is convenient to assume
that the relations are over the Herbrand universe.  
To describe such relations, one relation corresponding to each procedure of the
program (i.e.\ to a predicate symbol),
it is convenient to use a Herbrand interpretation.
Thus a (formal) {\bf specification} is a Herbrand interpretation, i.e.\ a
subset of \HB.

\mypresubsubspace
\subsubsection{Correctness and completeness.}
In imperative and functional programming, correctness usually means that the
program results 
are as specified.  In logic programming, due to its non-deterministic nature,
we actually have two issues: {\em correctness} (all the results are
compatible with the specification) and {\em completeness} (all the results
required by the specification are produced). 
In other words, correctness means that the relations defined by the program are
subsets of the specified ones, and completeness means inclusion in the
opposite  direction. 
In terms of specifications and the least Herbrand models
we define:
{\sloppy\par}
\begin{df}
\label{def:corr:compl}
Let $P$ be a program and $S\subseteq\HB$ a specification.
$P$ is {\bf correct} w.r.t.\ $S$ when $\M_P\subseteq S$;
it is {\bf complete} w.r.t.\ $S$ when $\M_P\supseteq S$.
\end{df}
We will sometimes skip the specification when it is clear from the context.
We propose to call a program {\bf fully correct} when it is both correct and complete.
If a program $P$ is fully correct  w.r.t.\ a specification $S$ then,
obviously, $\M_P=S$.

A program $P$ is correct w.r.t.\ a specification $S$ iff
$Q$ being an answer of $P$  implies $S\models Q$.
(Remember that $Q$ is an answer of $P$ iff $P\models Q$.)
The program is complete w.r.t.\ $S$ iff
$S\models Q$ implies that $Q$ is an answer of $P$.
(Here our assumption on an infinite set of function symbols is needed
\cite{drabent.report2014}.)

It is sometimes useful to consider local versions of these notions:

\begin{df}
\label{def:corr:compl:local}
A {\bf predicate} $p$ in $P$ is
 {\bf correct} w.r.t.\ $S$ when each $p$-atom of $\M_P$ is in $S$, and 
 {\bf complete} w.r.t.\ $S$ when each $p$-atom of $S$ is in $\M_P$.

An {\bf answer} $Q$ is  {\bf correct} w.r.t.\ $S$ when $S\models Q$.

$P$ is {\bf complete for a query} $Q$ w.r.t.\  $S$
when
$S\models Q\theta$ implies that $Q\theta$ is an answer for $P$,
for any ground instance $Q\theta$ of $Q$.

\end{df}
Informally,  $P$ is complete for $Q$
when 
all the answers for $Q$ required by the specification $S$ are answers of $P$.
Note that a program is complete w.r.t.\ $S$
 iff it is complete w.r.t.\ $S$ for any query
 iff it is complete w.r.t.\ $S$ for any query $A\in S$.

\mypresubsubspace
\subsubsection{Approximate specifications.}
\label{par:approximate}
Often it is difficult, and not necessary, to specify the relations defined by
a program exactly;  more formally, to require that $\M_P$ is equal to a given
specification.  Often the relations defined by programs are not exactly those
intended by programmers.  For instance this concerns the programs in
Chapter 3.2
of the textbook \cite{Sterling-Shapiro-shorter}
defining predicates
{\tt member/2},  {\tt append/3}, {\tt sublist/2}, and some others.
The defined relations are not those of list membership, concatenation,
etc.
However this is not an error, as for all intended queries the answers are 
as for a program defining the intended relations.
The exact semantics
of the programs
is not explained in the textbook;  such explanation is not needed.
Let us look more closely at {\tt append/3}.
\begin{example}
1.\mbox{ }%
The program  APPEND
\[
  app(\,[H|K],L,[H|M]\,) \gets app(\,K,L,M\,). \qquad\qquad
\linebreak[3]
  app(\,[\,],L,L\,). \
\]  
does not define the relation of  list concatenation.  
For instance, ${\rm APPEND} \models app([\,],1,1)$.
In other words, APPEND is not correct w.r.t.\ 
\[
S_{\rm APPEND}^0 = \{\, app(k,l,m)\in\HB \mid
                k,l,m \mbox{ are lists, } k*l=m
\,\},
\] 
where $k*l$ stands for the concatenation of lists $k,l$.
It is however complete w.r.t.\ $S_{\rm APPEND}^0$, and correct w.r.t.\ 
\[
S_{\rm APPEND} = \{\, app(k,l,m)\in\HB \mid
                \mbox{if $l$ or $m$ is a list then }
                app(k,l,m)\in S_{\rm APPEND}^0
\,\}.
\] 
Correctness w.r.t.\,$S_{\rm APPEND}$ and completeness 
w.r.t.\,$S_{\rm APPEND}^0$ are sufficient to show that
APPEND will produce the required 
results when used to concatenate or split lists.
More precisely, the answers for a query  $Q=app(s,t,u)$, where $t$
 is a list or $u$ is a list,
are $app(s\theta, t\theta, u\theta)$, where 
$s\theta, t\theta, u\theta$ are lists and $s\theta * t\theta = u\theta$.
(The lists may be non-ground.)

2.\mbox{ }%
Similarly, the procedures {\tt member/2} and {\tt sublist/2} are complete
w.r.t\ specifications describing the relation of list membership, and the
sublist relation.  It is easy to provide specifications, w.r.t.\ which
the procedures are correct.  For instance, 
{\tt member/2} is correct w.r.t.\ 
$
S_{\rm MEMBER}= \{\, member(t,u )\in\HB \mid
                \mbox{if }\linebreak[3]
                u=[\seq t] \mbox{ for some } n\geq0 \mbox{ then }
                t=t_i,  \mbox{ for some }
                 0<i\leq n \,\}
$.

3.\mbox{ }%
The exact relations defined by programs are often misunderstood.
For instance, in
\cite[Ex.\,15]{DBLP:journals/jlp/DevilleL94shorter}
it is claimed that a program $P r o g_1$
defines the relation of list inclusion.  In our terms, this means that 
predicate 
${\it included}$ of  $P r o g_1$ is correct and complete w.r.t.\ 
\[
\left\{\, {\it included}(l_1,l_2)\in\HB\ \left| \
\begin{array}{l}
     l_1,l_2 \mbox{ are lists},
\\
\mbox{every element of $l_1$ belongs to $l_2$}  
\end{array}
\right.\right\}.
\]
However the correctness does not hold:
The program contains a unary clause ${\it included}([\,],L)$,
so $P r o g_1\models{\it included}([\,],t)$ for any term $t$,
\end{example}

The examples show that in many cases it is unnecessary to know the semantics
of a program exactly.  Instead it is sufficient to describe it approximately.
An {\bf approximate specification} is a pair of specifications
$S_{\it c o m p l}, S_{corr}$, for completeness and correctness.
The intention is that the program is complete w.r.t.\ the former, and correct
w.r.t.\ the latter:
$S_{\it c o m p l}\subseteq \M_P \subseteq S_{corr}$.
In other words, the specifications $S_{\it c o m p l}, S_{corr}$ describe,
 respectively, which atoms have to be computed, and
which are allowed to be computed.
For the atoms from 
$S_{corr} \setminus  S_{\it c o m p l}$
the semantics of the program is irrelevant.
By abuse of terminology, $S_{corr}$ or  $S_{\it c o m p l}$ 
will sometimes also be called
approximate specifications.

\mypresubsubspace
\subsubsection{Proving correctness}
\label{sec:correctness}
We briefly discuss proving correctness, as it is complementary to the 
main subject of this paper.
The approach is due to Clark~\cite {Clark79}.

\begin{theorem}[Correctness]
\label{th:correctness}
A sufficient condition for a program $P$ to be correct w.r.t.\ 
a specification $S$ is
\[
\begin{tabular}l
     for each ground instance 
    $
    H\gets \seq B
    $
    of a clause of the program,
    \\
     if $\seq B\in S$ then $H\in S$.
\end{tabular}
\]
\end{theorem}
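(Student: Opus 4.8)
The plan is to prove this by showing that the hypothesis forces $S$ to be a Herbrand model of $P$, and then invoke the fact that $\M_P$ is the \emph{least} Herbrand model. First I would recall that $\M_P$ can be characterised as the intersection of all Herbrand models of $P$, so it suffices to establish that $S$, viewed as a Herbrand interpretation, is itself a model of $P$; then $\M_P \subseteq S$, which is exactly correctness in the sense of Definition~\ref{def:corr:compl}.

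The key step is therefore to verify $S \models P$, i.e.\ that $S$ satisfies every clause of $P$. Since satisfaction of a clause by a Herbrand interpretation is equivalent to satisfaction of all its ground instances, I would fix an arbitrary ground instance $H \gets B_1, \ldots, B_n$ of a clause of $P$ and check that $S$ makes it true. If $B_i \notin S$ for some $i$, the body is false in $S$ and the implication holds vacuously. Otherwise $B_1, \ldots, B_n \in S$, and the hypothesis of the theorem gives $H \in S$, so the clause is again satisfied. Hence every ground instance of every clause holds in $S$, so $S \models ground(P)$, and therefore $S \models P$.

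There is essentially no hard part here: the argument is a direct unwinding of the definitions of Herbrand model and of the least Herbrand model. The only point that requires a little care is the equivalence between ``$S$ satisfies a clause'' and ``$S$ satisfies every ground instance of that clause'' for Herbrand interpretations, but this is standard (it is the reason one may work with $ground(P)$ throughout) and I would simply cite it from \cite{Apt-Prolog}. One could alternatively phrase the same proof via the immediate-consequence operator $T_P$: the hypothesis says precisely that $S$ is a pre-fixpoint of $T_P$ (that is, $T_P(S) \subseteq S$), and since $\M_P$ is the least pre-fixpoint of $T_P$ we again conclude $\M_P \subseteq S$. Either route is short; I would present the model-theoretic version as it matches the declarative viewpoint emphasised in the paper.
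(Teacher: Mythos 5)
Your proof is correct and follows essentially the same route the paper indicates: the paper's remark immediately after the theorem notes that the hypothesis is equivalent to $S\models P$ and to $T_P(S)\subseteq S$, from which $\M_P\subseteq S$ follows because $\M_P$ is the least Herbrand model (equivalently, the least pre-fixpoint of $T_P$). Both of your suggested phrasings match the paper's intended argument, so there is nothing to add.
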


\begin{example}
\label{ex:split-corr}%
Consider a program SPLIT and a specification describing
how the sizes of the last two arguments of $s$ are related
($|l|$ denotes the length of a list~$l$):
\newcommand*{\specsplit}{\ensuremath{S}\xspace}
\begin{eqnarray}
  &&     s( [\,], [\,], [\,] ).     \label{split1}\\   
  &&     s( [X|X s], [X|Y s], Zs ) \gets s( X s, Z s, Y s ).   \label{split2}
\\\nonumber
\specsplit &=& \{\, s(l, l_1, l_2) \mid 
                 l, l_1, l_2 \mbox{ are lists, } 0\leq |l_1|-|l_2| \leq 1 \,\}.
\end{eqnarray}
SPLIT is correct w.r.t.\ \specsplit, by Th.\,\ref{th:correctness}
(the details are left for the reader, or see
\cite{drabent.report2014}).  
A stronger specification for which SPLIT is correct is shown in 
Ex.\,\ref{ex:split-compl}.
\end{example}

The sufficient condition is equivalent to
$S\models P$, and to $T_P(S)\subseteq S$.

Notice that the proof method is declarative.
The method should be well known, but is often neglected.
For instance it is not mentioned in \cite{Apt-Prolog}, where a more
complicated method, moreover not declarative, is advocated.
That method is not more powerful than the one of Th.\,\ref{th:correctness}
 \cite {DBLP:journals/tplp/DrabentM05shorter}.
See \cite {DBLP:journals/tplp/DrabentM05shorter,drabent.report2014}
for further examples, explanations, references and discussion.

\section{Proving completeness}
\label{sec:completeness}
We first introduce a notion of semi-completeness, and sufficient conditions
under which semi-completeness of a program implies its completeness.
Then a sufficient condition follows for semi-completeness.
We conclude the section with a way of showing completeness directly without employing semi-completeness.

\begin{df}
A {\em level mapping} is a
function $|\ |\colon \HB\to \NN$ assigning natural numbers to atoms.

A program $P$ is  {\bf recurrent} {w.r.t.\ a level mapping}~$|\ |$
\cite{Apt-Prolog} if, in
every ground instance  $H\gets\seq B\in ground(P)$ of its clause ($n\geq0$),
$|H|>|B_i|$ for all $i=1,\ldots,n$.
A program is {\em recurrent}
if it is recurrent w.r.t.\ some level mapping.   

\label{def:acceptable}%

A program $P$ is {\bf acceptable} w.r.t.\ a specification $S$ and a level
mapping $|\ |$ if 
$P$ is correct w.r.t.\ $S$, and for every
$H\gets\seq B\in ground(P)$
we have $|H|>|B_i|$ whenever $S\models B_1,\ldots,B_{i-1}$.
A program is {\em acceptable} if it is acceptable w.r.t.\ some level mapping
and some specification.
\end{df}

The definition of acceptable is more general than that of 
\cite{Apt-Prolog}, 
which requires $S$ to be a model of $P$.
 Both definitions make the same programs acceptable \cite{drabent.report2014}.

\begin{df}
A program $P$ is {\bf semi-complete} 
w.r.t.\ a specification $S$ if
$P$ is complete w.r.t.\ $S$ for any query $Q$ for which there exists a finite SLD-tree.
\end{df}

Less formally, the existence of a finite SLD-tree means
that $P$ with $Q$ terminates under some selection rule.
For a semi-complete program, if a computation for a query $Q$
terminates then all the required by the specification answers for $Q$ have
been obtained.
Note that a complete program is semi-complete.
Also:

\begin{propo}
[Completeness]
\label{prop:semi-compl}%
Let a program $P$ be semi-complete w.r.t.\ $S$. 
The program is complete w.r.t\ $S$ if
\begin{enumerate}
\item 
\label{prop:semi-compl:term}
for each query $A\in S$ there exists a finite SLD-tree, or

each $A\in S$ is an instance of a query $Q$ for which a finite SLD-tree
exists,~or
\item 
\label{prop:semi-compl:recu}
the program is recurrent, or
\item 
\label{prop:semi-compl:accept}
 the program is acceptable
(w.r.t.\ a specification $S'$ possibly distinct from $S$).
\end{enumerate}
\end{propo}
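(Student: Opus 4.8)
The plan is to reduce completeness to a termination question and let semi-completeness do the rest. By definition \emph{complete} means $S\subseteq\M_P$, so it suffices to show $A\in\M_P$ for every $A\in S$; recall that each such $A$ is ground, since $S\subseteq\HB$. The lever is semi-completeness: if for some query $Q$ having $A$ as an instance there is a finite SLD-tree, then $P$ is complete w.r.t.\ $S$ for $Q$; as $A$ is a ground instance of $Q$ with $S\models A$, this yields that $A$ is an answer of $P$, hence $P\models A$, hence $A\in\M_P$ (the last two equivalences hold because $A$ is ground, by soundness and completeness of SLD-resolution, as noted in the Preliminaries). So under each of the three hypotheses it is enough to produce, for every $A\in S$, a finite SLD-tree for $A$ or for a query having $A$ as an instance.

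Case~\ref{prop:semi-compl:term} is exactly that statement, so nothing remains to prove there. For Cases~\ref{prop:semi-compl:recu} and~\ref{prop:semi-compl:accept} I would invoke the classical termination results for recurrent and acceptable programs \cite{Apt-Prolog}: a recurrent program has a finite SLD-tree for every ground query under every selection rule, and an acceptable program has a finite LD-tree -- which is an SLD-tree, for the leftmost selection rule -- for every ground query. Since every $A\in S$ is ground, in both cases a finite SLD-tree for $A$ itself exists, and we fall back to Case~\ref{prop:semi-compl:term} with $Q=A$. For the recurrent case one can instead argue directly: along any SLD-derivation issuing from a ground query the multiset of level mappings of the atoms in the current query strictly decreases in the multiset order over \NN, which is well-founded; together with finite branching and K\"onig's lemma this makes the SLD-tree finite.

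The only delicate point is Case~\ref{prop:semi-compl:accept}: the notion of \emph{acceptable} used here is more general than the one of \cite{Apt-Prolog}, since it does not require the specification $S'$ to be a model of $P$, so the quoted left-termination theorem does not literally apply. I would dispose of this by using the fact recorded after Definition~\ref{def:acceptable} that the two notions define the same class of programs, and then applying the theorem to an Apt-style witness. Alternatively one re-proves left-termination directly for the generalized notion by the standard argument: along an LD-derivation from a ground query, a partial computed answer $\sigma$ for an already-resolved prefix $B_1,\dots,B_{i-1}$ of a clause body satisfies $S'\models(B_1,\dots,B_{i-1})\sigma$ (because $P$ is correct w.r.t.\ $S'$, so all its answers are), whence the relevant ground instances give $|H|>|B_i|$, providing a well-founded decreasing measure. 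The bookkeeping of those partial answers is the fiddliest step, but it is entirely routine.
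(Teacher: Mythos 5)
Your proof is correct and follows what is evidently the intended argument: reduce cases \ref{prop:semi-compl:recu} and \ref{prop:semi-compl:accept} to case \ref{prop:semi-compl:term} via the classical termination theorems for recurrent and acceptable programs, and then let semi-completeness together with the ground-instance clause of Definition~\ref{def:corr:compl:local} deliver $A\in\M_P$ for each $A\in S$. The paper itself defers this proof to the full report, but your treatment -- including the observation that the generalized notion of acceptability needs the equivalence recorded after Definition~\ref{def:acceptable} (or a direct re-proof of left-termination) before Apt's theorem applies -- covers exactly the points that proof must address.
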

\subsubsection
{Proving semi-completeness.}
We need the following notion.

\begin{df}
  A ground atom $H$ is
  {\bf covered by a clause} $C$ w.r.t.\ a specification $S$ \cite{Shapiro.book}
  if $H$ is the head of a ground instance  
  $
  H\gets \seq B
  $
  ($n\geq0$) of $C$, such that all the atoms $\seq B$ are in $S$.
  A ground atom $H$ is {\bf covered} {\bf by a program} $P$ w.r.t.\ $S$
  if it is covered w.r.t.\ $S$ by some clause $C\in P$.
\end{df}

For instance, given a specification 
$S = \{ p(s^i(0))\mid i\geq0 \}$,
atom $p(s(0))$ is covered both by
$p(s(X))\gets p(X)$ and by
$p(X)\gets p(s(X))$.

Now we present a sufficient condition for semi-completeness.
Together with  Prop.\,\ref{prop:semi-compl} it
provides a sufficient condition for completeness.

\begin{theorem}[Semi-completeness]
\label{th:completeness}%
If all the atoms from a specification $S$ are covered w.r.t.~$S$
 by a program $P$ 
then $P$ is semi-complete w.r.t.~$S$.
\end{theorem}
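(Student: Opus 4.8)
The plan is to fix a query $Q$ that admits a finite SLD-tree $\mathcal{T}$ (under some selection rule), take an arbitrary ground instance $Q\theta$ with $S\models Q\theta$ — equivalently, with every atom of $Q\theta$ in $S$ — and prove $P\models Q\theta$; by soundness and completeness of SLD-resolution this is precisely what ``$P$ is complete w.r.t.\ $S$ for $Q$'' demands. The argument goes by induction on the height of $\mathcal{T}$.

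The base case is the empty query, for which $Q\theta$ is empty and $P\models Q\theta$ holds vacuously. For the inductive step let $Q = A_1,\dots,A_k$ with $k\geq 1$, and let $A_m$ be the atom selected at the root of $\mathcal{T}$. Since $A_m\theta\in S$, the covering hypothesis supplies a clause $C\in P$ together with a ground instance $A_m\theta\gets B_1,\dots,B_n$ of a variant of $C$ with all $B_j\in S$. In particular $A_m\theta$ is a common instance of $A_m$ and of the head of $C$, so these two atoms are unifiable; hence $\mathcal{T}$ has a child $Q'$ of the root obtained by resolving $A_m$ against $C$ with some mgu $\sigma$, and the subtree of $\mathcal{T}$ rooted at $Q'$ is a finite SLD-tree for $Q'$ of strictly smaller height.

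The crucial step is to transport $\theta$ and the witnessing ground instance of $C$ down to $Q'$. Taking the union of $\theta$ (on the variables of $Q$) with the substitution that exhibits $A_m\theta\gets B_1,\dots,B_n$ as an instance of the renamed-apart $C$ yields a unifier of $A_m$ and the head of $C$; as $\sigma$ is most general, this unifier factors as $\sigma\delta$ for some $\delta$. Then $Q'\delta$ is ground, and each of its atoms is either some $A_i\theta$ with $i\neq m$, which lies in $S$ because it occurs in $Q\theta$, or some $B_j$, which lies in $S$ by the choice of $C$; thus $S\models Q'\delta$. Applying the induction hypothesis to $Q'$ (and its finite subtree) gives $P\models Q'\delta$, whence $P\models A_i\theta$ for all $i\neq m$ and $P\models B_j$ for all $j$. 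Since $A_m\theta\gets B_1,\dots,B_n$ is a ground instance of a clause of $P$, it follows that $P\models A_m\theta$, and therefore $P\models Q\theta$, completing the induction.

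I expect the main obstacle to be the lifting step in the third paragraph: getting the bookkeeping of most general unifiers and variable renamings exactly right, so that the child $Q'$ occurring in $\mathcal{T}$ is literally the query to which the induction hypothesis is applied, and so that $Q'\delta$ has all of its atoms in $S$. The remainder is routine once the induction is organised on the height of the finite SLD-tree — and it is precisely this finiteness that blocks an infinite descent, since the covering hypothesis by itself only gives $S\subseteq T_P(S)$, which does not entail $S\subseteq\M_P$.
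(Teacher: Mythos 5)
Your proof is correct, but it is packaged differently from the paper's. The paper (in the full report, and as recalled in the appendix here) proves semi-completeness by constructing, from the ground instance $Q\theta$ with $S\models Q\theta$, a ground \emph{unrestricted} derivation whose clauses are the covering ground instances (all body atoms in $S$), together with a lift of it that is a branch of the given finite SLD-tree; finiteness of the tree forces that branch to end in success (it can never get stuck, since each selected ground atom is in $S$ and hence covered), and the lifting theorem then exhibits $Q\theta$ as an instance of a computed answer \emph{of that very tree}. You instead organise the same one-step descent --- covering gives a child $Q'$ and a ground instance $Q'\delta$ with all atoms in $S$ --- as an induction on the height of the finite tree, and conclude semantically that $P\models Q\theta$, invoking completeness of SLD-resolution as a black box to translate this back into ``$Q\theta$ is an answer.'' The combinatorial core (finiteness replaces the missing $S\subseteq\M_P$, since covering alone only yields $S\subseteq T_P(S)$) is identical, and your unifier bookkeeping in the third paragraph is exactly the standard lifting argument done one level at a time. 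What your version buys is a cleaner, self-contained induction; what the paper's version buys is the stronger, operational conclusion that the required answer is produced along a branch of the given tree, which is precisely the form that is reused and adapted in the proofs about pruned (csSLD- and cut-pruned) trees in Sect.~4, where one must track \emph{which} branches survive pruning and a purely semantic conclusion would not suffice. One small point of care: your base case should be stated as ``height $0$'' rather than ``the empty query'' --- a height-$0$ node with a non-empty query is handled because your own inductive-step argument shows such a node must have a child whenever some ground instance of it is in $S$, so the case is vacuous --- but this is bookkeeping, not a gap.
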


\begin{example}
\label{ex:split-compl}%
We show that program SPLIT from Ex.\,\ref{ex:split-corr} is complete w.r.t.
\[
S_{\rm SPLIT} = 
\left\{\,
\begin{array}{l}
 s([\seq[2n]t], [t_1,\cdots,t_{2n-1}], [t_2,\cdots,t_{2n}] ),
\\
 s([\seq[2n+1]t], [t_1,\cdots,t_{2n+1}], [t_2,\cdots,t_{2n}] )
\end{array}
   \,
   \left|
   \,
   \begin{array}l
      n\geq0,\\ \seq[2n+1] t\in \HU  \\
   \end{array}
\right.\right\}
,
\]
where
$[t_k,\cdots,t_l]$ denotes the list $[t_k,t_{k+2},\ldots,t_l]$,
for $k,l$  both odd or both even.

Atom
$s([\,],[\,],[\,])\in S_{\rm SPLIT}$ is covered by clause (\ref{split1}).
For $n>0$,
any atom
$A= \linebreak[3]
s([\seq[2n]t], \linebreak[3]
      [t_1,\cdots,t_{2n-1}],\linebreak[3] [t_2,\cdots,t_{2n}] )$
is covered by an instance of  (\ref{split2}) with a body
$B= \linebreak[3]
     s([t_2,\ldots,t_{2n}], \linebreak[3]
     [t_2,\cdots,t_{2n}] ,\linebreak[3] [t_3,\cdots,t_{2n-1}]
 )$.
Similarly, 
for $n\geq0$ and any atom
$A= s([\seq[2n+1]t],\linebreak[3]
         [t_1,\cdots,t_{2n+1}],\linebreak[3] [t_2,\cdots,t_{2n}] )$,
the corresponding body is
$B= s([t_2,\ldots,t_{2n+1}],\linebreak[3] [t_2,\cdots,t_{2n}],
 \linebreak[3]     [t_3,\cdots,t_{2n+1}] )$.
In both cases, $B\in S_{\rm SPLIT}$.
(To see this, rename each $t_i$ as $t'_{i-1}$.)
So $S_{\rm SPLIT}$ is covered by SPLIT.
Thus SPLIT is semi-complete w.r.t.\ $S_{\rm SPLIT}$, by
Th.\,\ref{th:completeness}.

Now by Prop.\,\ref{prop:semi-compl} the program is complete, as it is
recurrent under the level mapping
 $  |  s(t,t_1,t_2) | = |t| $, where
 $  |\, [h|t]\, | = 1+|t| $ and 
 $  |f(\seq t)| = 0 $
 (for any ground terms $h,t,\seq t$, and any function symbol $f$ 
  distinct from $[\ | \ ]$\,).

By Th.\,\ref{th:correctness}
the program is also correct w.r.t.\ $S_{\rm SPLIT}$, as 
$S_{\rm SPLIT}\models {\rm SPLIT}$.  (The details are left to the reader.)
Hence $S_{\rm SPLIT}= \M_{\rm SPLIT}$.
\end{example}

    Note that the sufficient condition of Th.\,\ref{th:completeness}
     is equivalent to $S\subseteq T_P(S)$,
    which implies $S\subseteq {\rm g f p}(T_P)$.
It is also equivalent to $S$ being a model of ONLY-IF$(P)$
(see e.g.\ \cite{Doets} or \cite{DBLP:journals/tplp/DrabentM05shorter} for a definition).

The notion of semi-completeness is tailored for finite programs.  An SLD-tree
for a query $Q$ and an infinite program $P$ may be infinite, but with all
branches finite.  In such case,
if the condition of Th.\,\ref{th:completeness} holds
then $P$ is complete for $Q$ \cite{drabent.report2014}.

\subsubsection
{Proving completeness directly.}
\label{sec:completeness-directly}
Here we present another declarative way of proving completeness;
a condition is added to Th.\,\ref{th:completeness} so that completeness is
implied directly.  This also works for non-terminating programs.
However when termination has to be shown anyway, applying
Th.\,\ref{th:completeness} seems simpler.

In this section we allow that a level mapping is a {\em partial}
function $|\ |\colon \HB\partto \NN$ assigning natural numbers to some atoms.

\begin{df}
  A ground atom $H$ is {\bf recurrently covered}
  by a program $P$   w.r.t.\ a specification $S$ and a level mapping 
$|\ |\colon \HB\partto \NN$
  if $H$ is the head of a ground instance  
  $
  H\gets \seq B
  $
($n\geq0$)
of a clause of the program, such that
$|H|, |B_1|, \ldots |B_n|$ are defined, 
$\seq B\in S$,
and $|H|>|B_i|$ for all $i=1,\ldots,n$.

\end{df}

For instance, given a specification 
$S = \{\, p(s^i(0))\mid i\geq0 \,\}$,
atom $p(s(0))$ is recurrently covered by a program
$\{\, p(s(X))\gets p(X). \}$ under a level mapping for which 
$|p(s^i(0))|=i$.
No atom is recurrently covered
by \mbox{$\{\, p(X)\gets p(X). \}$}.
Obviously, if $H$ is recurrently covered by $P$ then it is covered by $P$.
If $H$ is covered by $P$ w.r.t.\ $S$ and $P$ is recurrent w.r.t. $|\;|$ 
then $H$ is recurrently covered w.r.t. $S,|\;|$.  The same holds for $P$
acceptable w.r.t.\ an $S'\supseteq S$.

\begin{theorem}[Completeness 2]
\label{th:completeness:recu}%
 {\rm (A reformulation of Th.\,6.1 of \cite{Deransart.Maluszynski93}).}
If, under some level mapping $|\ |\colon \HB\partto \NN$,
 all the atoms from a specification $S$ are
recurrently covered by a program $P$ w.r.t.\ $S$ 
then $P$ is complete w.r.t.\ $S$.
\end{theorem}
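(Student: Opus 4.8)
The plan is to show directly that every ground atom $A\in S$ satisfies $A\in\M_P$, by induction on the level $|A|$. Since all atoms of $S$ are recurrently covered, $A$ is the head of a ground instance $A\gets B_1,\ldots,B_n$ of a clause of $P$ with $|A|,|B_1|,\ldots,|B_n|$ all defined, $B_1,\ldots,B_n\in S$, and $|A|>|B_i|$ for each $i$. By the induction hypothesis (applied to each $B_i$, whose level is strictly smaller and which lies in $S$), we get $B_i\in\M_P$ for all $i$. Since $\M_P$ is a model of $P$ and $A\gets B_1,\ldots,B_n$ is a ground instance of a program clause with all body atoms true in $\M_P$, the head $A$ is true in $\M_P$, i.e.\ $A\in\M_P$. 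This is exactly the statement that $P$ is complete w.r.t.\ $S$, since $S\subseteq\M_P$ is the definition of completeness.

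First I would fix the level mapping $|\ |\colon\HB\partto\NN$ witnessing the hypothesis, and set up a well-founded induction on $\NN$ (ordinary strong induction on natural numbers suffices, since the level of every body atom that matters is a natural number strictly below $|A|$). The base case ($|A|$ minimal, e.g.\ $|A|=0$) is not special: recurrent covering still supplies a ground clause instance, but now necessarily with $n=0$ — a body atom would need a strictly smaller defined level, which is impossible — so $A$ is the head of a ground unit clause instance and hence $A\in\M_P$ immediately. The inductive step is the displayed argument above; the only point requiring a line of care is that the induction hypothesis applies to each $B_i$ because $B_i\in S$ (guaranteed by recurrent covering) \emph{and} $|B_i|$ is defined and $<|A|$ (also guaranteed), so $B_i$ falls under the inductive assumption.

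The main obstacle, such as it is, is conceptual rather than technical: one must be careful that the level mapping is partial, so the induction is really over the set $\{\,|A| : A\in S\,\}\subseteq\NN$ of \emph{defined} levels, and that recurrent covering guarantees definedness of all the relevant levels simultaneously — this is precisely what lets the recursion stay inside the domain of $|\ |$ and not "fall off" to an undefined value. Once that is observed, the proof is a two-line well-founded induction closing on the model property of $\M_P$. No appeal to termination, to SLD-trees, or to semi-completeness (Theorem~\ref{th:completeness}) is needed, which is the point of stating this as a separate result; in particular it applies to non-terminating programs, as the remark preceding the theorem anticipates.
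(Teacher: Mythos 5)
Your proof is correct: the strong induction on the (defined) level $|A|$ of atoms $A\in S$, closing each step with the fact that $\M_P$ is a Herbrand model of $P$ (so a ground clause instance with all body atoms in $\M_P$ puts its head in $\M_P$), is exactly the intended argument, and your observations about the partiality of the level mapping and the forced $n=0$ at minimal level are the only points needing care. The paper itself omits the proof of this theorem (deferring to its full version and to Th.\,6.1 of Deransart and Ma{\l}uszy\'nski), but your argument matches the standard one and establishes $S\subseteq\M_P$ as required.
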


\begin{example}
\label{ex:infinite}%
Consider a directed graph $E$.  As a specification for a program describing
reachability in $E$, take 
$S = S_p\cup S_e$,
where
\[
\begin{tabular}{l}
$
S_p = \{\, p(t,u) \mid
 \mbox{there is a path}\linebreak[3]\ 
 \mbox{from } t \mbox{ to } u \mbox{ in } E \,\}
$,
\\
$
S_e = \{\, e(t,u) \mid (t,u)\mbox{ is an edge in } E \,\}
$.  
\end{tabular}
\]
Let $P$ consist of a procedure $p$: 
$
\{\,    p(X,X). \ \
    p(X,Z) \gets e(X,Y),\, p(Y,Z). 
\}
$
and a procedure $e$ which is a set of unary clauses describing the edges of
the graph.  Assume the latter is complete w.r.t.\ $S_e$.
Notice that when $E$ has cycles then 
infinite SLD-trees cannot be avoided, and
completeness of $P$ cannot be shown by 
Prop.\,\ref{prop:semi-compl}.

 To apply Th.\,\ref{th:completeness:recu}, let
 us define a level mapping for the elements of $S$ such that
$|e(t,u)| = 0$ and
$|p(t,u)|$ is the length of a shortest path in $E$ from $t$ to $u$
 (so $|p(t,t)|=0$).
Consider a $p(t,u)\in S$ where $t\neq u$.
Let 
$t=t_0,\seq t=u$ be a shortest path from $t$ to $u$.
Then $e(t,t_1),p(t_1,u)\in S$,
$|p(t,u)|=n$, $|e(t,t_1)|=0$, and $|p(t_1,u)|=n-1$.
Thus $p(t,u)$ is recurrently covered by $P$ w.r.t.\ $S$ and $|\ |$.
The same trivially holds for the remaining atoms of $S$.
So $P$ is complete w.r.t.~$S$.
\end{example}

\section{Pruning SLD-trees and completeness}
\label{sec:pruning}

Pruning some parts of SLD-trees is often used to improve efficiency of programs.
It is implemented by using the cut,  the if-then-else construct of
 Prolog, or  built-ins, like {\tt once/1}. 
Pruning preserves the correctness of a logic program, it also preserves
termination under a given selection rule,
but may violate the program's completeness.  
We now discuss proving
 that completeness is preserved.

By a {\bf pruned SLD-tree} for a program $P$ and a query $Q$ we mean a tree
with the root $Q$
which is a connected subgraph of an SLD-tree for $P$ and $Q$.
By an  {\em answer} of a pruned SLD-tree we mean the computed answer of a
successful SLD-derivation which is a branch of the tree.
We will say that a pruned SLD-tree $T$ with root $Q$
is {\bf complete} w.r.t.\ a specification $S$
if, for any ground  $Q\theta$,
$S\models Q\theta$ implies that $Q\theta$ is an instance of an answer of $T$.
Informally, such a tree produces all the answers for $Q$ required by $S$.

We present two approaches for proving completeness of pruned SLD-trees.
The first one is based on viewing pruning as skipping certain clauses
while building the children of a node.  The other deals with a restricted usage
of the cut.

  \newcommand{\myellipse}{%
      \begin{pgfpicture}{-1.5cm}{-.5cm}{1.5cm}{.5cm}
        \pgfsetxvec{\pgfpoint{.5cm}{0cm}}
        \pgfsetyvec{\pgfpoint{0cm}{.5cm}}
        \pgfsetlinewidth{1pt}
        \pgfmoveto{\pgfxy(-3,0)}
        \pgfcurveto{\pgfxy(-3,-1)}{\pgfxy(3,-1)}{\pgfxy(3,0)}
        \pgfstroke
        \pgfmoveto{\pgfxy(3,0)}
        \pgfsetdash{{3pt}{3pt}}{0pt}
        \pgfcurveto{\pgfxy(3,1)}{\pgfxy(-3,1)}{\pgfxy(-3,0)}
        \pgfstroke
      \end{pgfpicture}
  }
  \newcommand{\dblue}{\color[rgb]{0,0,.5}}
  \newcommand{\dgreen}{\color[rgb]{0,0.5,0}}

  \newcommand{\mydiagram}{%
  \begin{pgfpicture}{-2.5cm}{-.5cm}{2.5cm}{1.7cm}
        \pgfsetxvec{\pgfpoint{.5cm}{0cm}}
        \pgfsetyvec{\pgfpoint{0cm}{.5cm}}
    \pgfputat{\pgfxy(0,3)}{\pgfbox[center,center]
      {{\footnotesize\ldots,}$\underline A${\footnotesize,\ldots}}
    }
    \pgfline{\pgfxy(-0.6,2.5)}{\pgfxy(-5,0)}
    \pgfline{\pgfxy(-0.2,2.5)}{\pgfxy(-.5,0)}
    \begin{pgfscope}
      \pgfsetdash{{3pt}{1.5pt}}{0pt}
      \pgfline{\pgfxy(0.2,2.5)}{\pgfxy(0.5,0)}
      \pgfline{\pgfxy(0.6,2.5)}{\pgfxy(5,0)}
    \end{pgfscope}

    \pgfputat{\pgfxy(-4.9,.7)}
             {\pgfbox[center,center]{\mbox{\small\footnotesize$\dgreen\Pi_{i}$}}}
    \pgfputat{\pgfxy(-1.1,1)}{
      \pgfbox[center,center]{\dgreen
        \begin{pgfmagnify}{.7}{.6}
            \myellipse
        \end{pgfmagnify}
        }
    }
    \pgfputat{\pgfxy(-1.2,1.2)}{{\pgfbox[center,center]{$\cdots$}}}
    \pgfputat{\pgfxy(+2.2,0.6)}{{\pgfbox[center,center]{$\cdots$}}}

    \pgfputat{\pgfxy(-2.9,1.9)}
             {\pgfbox[center,center]{\mbox{\small\footnotesize$\dblue P$}}}
    \pgfputat{\pgfxy(-2.4,1.3)}{\dblue
        \begin{pgfmagnify}{.8}{.5}
            \myellipse
        \end{pgfmagnify}
    }      
    \pgfputat{\pgfxy(2.5,-.5)}
             {\pgfbox[center,center]{\mbox{\small\footnotesize pruned}}}  
    \pgfputat{\pgfxy(-2.5,-.5)}
             {\pgfbox[center,center]{\mbox{\small\footnotesize not pruned}}}  
  \end{pgfpicture}%
 } %

{ %

\subsection{Pruning as clause selection.}

To facilitate reasoning about the answers of pruned SLD-trees,
we will now view pruning as applying only certain clauses 
while constructing the children of a 
\setlength{\intextsep}{0.4ex}
\begin{wrapfigure}{r}{4.5cm}
\hfill
\scalebox{.7}
    {\mydiagram}
\hfill\mbox{}
\end{wrapfigure}
node. 
So we introduce subsets $\seq\Pi$ of $P$.  The intention is that
for each node the clauses of exactly one $\Pi_i$ are used.
Programs $\seq\Pi$ may be not disjoint.

} %

\vspace{.4ex plus 2ex}

\begin{df}
Given programs $\seq{\Pi}$ ($n>0$),
a {\bf c-selection rule} is a function%
\
assigning to a query $Q'$ an atom $A$
in $Q'$ and one of the programs $\emptyset,\seq{\Pi}$.

A {\bf csSLD-tree} (cs for clause selection) for a query
$Q$ and programs
$\seq\Pi$, via a c-selection rule $R$,
is constructed as an SLD-tree, but
for each node its children are constructed using the program selected by the
c-selection rule.
An answer of a csSLD-tree is defined in the expected way.  
\sloppy
\end{df}

A c-selection rule may choose the empty program, thus 
making a given node a leaf.
Notice that a csSLD-tree for $Q$ and $\seq\Pi$ is a pruned SLD-tree
for $Q$ and $\bigcup_i\Pi_i$.   Conversely, for each pruned SLD-tree $T$
for $Q$ and a (finite) program $P$ there exist $n>0$, and
$\seq\Pi\subseteq P$ such that 
 $T$ is a csSLD-tree for $Q$ and $\seq\Pi$.

\begin{example}
\label{ex:prune}%
 We show that completeness of each of $\seq\Pi$ is not sufficient for
 completeness of a csSLD-tree for $\seq\Pi$.
    Consider a program $P$:
\vspace*{-.5\abovedisplayskip}
\par\noindent\mbox{}\hfill
\hspace*{-3em}
 \begin{minipage}{.35\textwidth}%
    \begin{eqnarray}
      &&
\label{Excl1}
    q(X)\gets p(Y,X). \\
      &&
\label{Excl2}
     p(Y,0).
    \end{eqnarray}
  \end{minipage}
\qquad
  \begin{minipage}{.4\textwidth}
    \begin{eqnarray}  
\label{Excl3}  &&
     p(a,s(X))\gets p(a,X).      \\
\label{Excl4} &&
      p(b,s(X))\gets p(b,X).
    \end{eqnarray}
  \end{minipage}
\hfill\mbox{}
\vspace{\belowdisplayskip}
\par\noindent
and programs $\Pi_1= \{(\ref{Excl1}), (\ref{Excl2}), (\ref{Excl4})\}$,
$\Pi_2= \{(\ref{Excl1}), (\ref{Excl2}), (\ref{Excl3})\}$,
    As a specification for completeness consider
    $S_0=\{\,q(s^j(0)) \mid j\geq0 \,\}$.
  Each of the programs  $\Pi_1, \Pi_2,P$ is complete w.r.t.\ $S_0$.
    Assume a c-selection rule $R$ choosing alternatively $\Pi_1,\Pi_2$ along
    each branch of a tree. 
    Then the csSLD-tree for $q(s^j(0))\in S_0$ via $R$ (where $j>2$) 
    has no answers,
    thus the tree is not complete w.r.t.\ $S_0$.

\end{example}

Consider programs $P,\seq\Pi$ and specifications $S,\seq S$, such that 
$P\supseteq\bigcup_{i=1}^n\Pi_i$ and $S=\bigcup_{i=1}^n S_i$.
The intention is that each $S_i$ describes which answers are to be produced by
using $\Pi_i$ in the first resolution step.
We will call $\seq {\Pi}$, $\seq S$ a {\bf split} (of $P$ and $S$).
Note that  $\seq {\Pi}$ or $\seq S$ may be not disjoint.

\begin{df}
\label{def:suitable}
Let  $\S =\seq {\Pi}$, $\seq S$ be a split, and $S=\bigcup S_i$.

Specification $S_i$ is {\bf suitable} for an atom $A$
w.r.t.\ \S
when no instance of $A$ is in $S\setminus S_i$.
(In other words, when $ground(A)\cap S \subseteq S_i$.)
We also say that a program $\Pi_i$ is {\bf suitable} for $A$ w.r.t.\ \S
when $S_i$ is.

A c-selection rule is {\bf compatible} with \S if for each non-empty query
$Q$ it selects an atom $A$
and a program $\Pi$, such that 

\quad
-- $\Pi\in\{\seq\Pi\}$ is suitable for $A$ w.r.t.\ \S, or

\quad
-- none of $\seq\Pi$ is suitable for $A$ w.r.t.\ \S and  $\Pi=\emptyset$
(so $Q$ is a leaf).

A csSLD-tree for $\seq {\Pi}$ via a c-selection rule compatible with \S
is said to be {\bf weakly compatible} with \S.
The tree is {\bf compatible} with \S iff
for each its nonempty node some $\Pi_i$ is selected.
\end{df}

  The intuition is that 
  when $\Pi_i$ is suitable for $A$
  then $S_i$ is a fragment of $S$ sufficient to deal with $A$.
  It describes all the answers for query $A$ required by~$S$.

The reason of incompleteness of the trees in Ex.\,\ref{ex:prune}
may be understood 
as selecting a $\Pi_i$ not suitable for the selected atom.
Take  $\S=\Pi_1,\Pi_2, S_0\cup S_1',S_0\cup S_2'$,
where
$S_1' = \{\,p(b,s^i(0)) \mid i\geq0 \,\}$ and
$S_2' = \{\,p(a,s^i(0)) \mid i\geq0 \,\}$.
In the incomplete trees,
$\Pi_1$ is selected for an atom $A=p(a,u)$,
or $\Pi_2$ is selected for an atom $B=p(b,u)$ (where $u\in\TU$).
However
$\Pi_1$ is not suitable for $A$ 
whenever $A$ has an instance in $S$
(as then
 $ground(A)\cap S \not\subseteq S_0\cup S_1'$);
similarly for $\Pi_2$ and~$B$.

  When  $\Pi_i$ is suitable for  $A$ then if
    each atom of $S_i$ is covered by $\Pi_i$
  (w.r.t.\ $S$)
  then using for $A$ only the clauses of $\Pi_i$ does not impair
  completeness w.r.t.~$S$:

\newcommand{\thcompletenesspruned}[1]
{%
    Let $P\supseteq\bigcup_{i=1}^n\Pi_i$ (where $n>0$) be a program, 
    $S=\bigcup_{i=1}^n S_i$ a specification, and 
    $T\!$ a csSLD-tree for $\seq\Pi$.
    If
    \begin{enumerate}
    \item 
    \label{prop:cssld.complete.cond0}
        for each $i=1,\ldots,n$, 
        all the atoms from $S_i$ are covered by $\Pi_i$ w.r.t.\ $S$, and 
    \item
    \label{prop:cssld.complete.cond1}
        $T$ is compatible with $\seq\Pi,\seq S$,
    \item 
    \label{prop:cssld.complete.cond2}
    \begin{enumerate}
      \item 
      \label{prop:cssld.complete.cond2a}
      #1
      \item 
      \label{prop:cssld.complete.cond2b}
          program $P$ is recurrent, or
      \item 
      \label{prop:cssld.complete.cond2c}
        $P$ is acceptable
        (possibly w.r.t.\ a specification distinct from $S$), and
          $T$ is built under the Prolog selection rule 
    \end{enumerate}
    \end{enumerate}
    then $T$ is complete w.r.t.\ $S$.
} %

\begin{theorem}
\label{th:completeness:pruned}%
\label{prop:cssld.complete}%
\thcompletenesspruned{         $T$ is finite, or}
\end{theorem}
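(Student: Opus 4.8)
The plan is to reduce the statement to the case where $T$ is \emph{finite}, and then prove it by induction on~$T$; throughout, the engine of the argument is that compatibility~(\ref{prop:cssld.complete.cond1}) forces, at every nonempty node of~$T$, the selected program $\Pi_i$ to be \emph{suitable} for the selected atom~$A$, so that whenever the node's ground instance satisfies~$S$ the atom~$A$ has a ground instance in~$S_i$, which by~(\ref{prop:cssld.complete.cond0}) is covered by a clause of~$\Pi_i$ w.r.t.~$S$; that clause then yields the child through which the induction descends.

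For the reduction, if no ground instance $Q\theta$ of the root~$Q$ of~$T$ satisfies $S\models Q\theta$, then $T$ is complete w.r.t.~$S$ vacuously, so I may assume such a~$Q\theta$ exists. Under~(\ref{prop:cssld.complete.cond2a}) $T$ is finite by hypothesis. Under~(\ref{prop:cssld.complete.cond2b}), a recurrent~$P$ makes the \emph{ground} query~$Q\theta$ bounded, hence every SLD-derivation of $P\cup\{Q\theta\}$ finite~\cite{Apt-Prolog}; by the Lifting Lemma every SLD-derivation of $P\cup\{Q\}$ is then finite, and, $P$ being finite, $T$ has finite branching, so $T$ is finite by K\"onig's Lemma. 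Under~(\ref{prop:cssld.complete.cond2c}), an acceptable~$P$ is left-terminating~\cite{Apt-Prolog}, so every \emph{leftmost} SLD-derivation of the ground~$Q\theta$ is finite; since $T$ is built under the Prolog selection rule, the same lifting/K\"onig argument, now for leftmost derivations, again yields that $T$ is finite.

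With $T$ finite I would prove, by induction on~$T$, that for every node~$N$ with query~$Q_N$ and every grounding~$\rho$ with $S\models Q_N\rho$ there is a successful branch of~$T$ issuing from~$N$ whose computed answer has $Q_N\rho$ as an instance; the theorem is the case where $N$ is the root. Conditions~(\ref{prop:cssld.complete.cond0}) and~(\ref{prop:cssld.complete.cond1}) are inherited by the subtree rooted at any node (the first does not refer to~$T$; the second because that subtree is again a csSLD-tree via the same, split-compatible, c-selection rule, and its nonempty nodes are nonempty nodes of~$T$). If $Q_N$ is empty, $N$ is itself a success leaf. Otherwise, by~(\ref{prop:cssld.complete.cond1}) some $\Pi_i\ne\emptyset$ is selected for an atom~$A$ of~$Q_N$ and is suitable for~$A$; hence $A\rho\in S$, so $A\rho\in S_i$, and by~(\ref{prop:cssld.complete.cond0}) there are a clause $C\in\Pi_i$ and a ground instance $A\rho\gets B_1,\dots,B_l$ of~$C$ with every $B_j\in S$. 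As $A\rho$ is a common instance of~$A$ and of the head of (a variant of)~$C$, the atom~$A$ and that head unify; let $N'$ be the resulting child of~$N$ in~$T$, obtained via an mgu~$\sigma$. Merging~$\rho$ with the grounding of~$C$ that witnesses the covering into a single unifier of~$A$ and that head, and factoring it through~$\sigma$, gives a grounding~$\rho'$ of~$Q_{N'}$ with $Q_N\rho = Q_N\sigma\rho'$ and with $Q_{N'}\rho'$ equal to~$Q_N\rho$ after replacing the displayed occurrence of~$A\rho$ by $B_1,\dots,B_l$; thus $S\models Q_{N'}\rho'$. The subtree at~$N'$ is strictly smaller, so the induction hypothesis supplies a successful branch from~$N'$ whose answer~$\beta'$ has $Q_{N'}\rho'$ as an instance; prepending the step $N\to N'$ turns it into a successful branch from~$N$ with answer~$\beta$, and a routine lifting computation shows $Q_N\rho = Q_N\sigma\rho'$ is an instance of~$\beta$.

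The points demanding care are the finiteness argument of the reduction---it leans on the standard facts that bounded queries terminate for recurrent programs and that acceptable programs are left-terminating, combined with the Lifting Lemma and K\"onig's Lemma, and the leftmost case must be threaded through the Prolog-rule hypothesis~(\ref{prop:cssld.complete.cond2c})---and, in the inductive step, the bookkeeping showing that~$Q_N\rho$ is an instance of the computed answer~$\beta$, which must track how the ``loose'' clause variables occurring in~$\beta$ get instantiated; both are in essence applications of the classical Lifting Lemma. By contrast, the inductive descent itself is routine once compatibility is used to guarantee that at each nonempty node a~$\Pi_i$ suitable for the selected atom is available, making~(\ref{prop:cssld.complete.cond0}) applicable.
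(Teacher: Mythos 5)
Your inductive argument for the finite case is essentially the intended one: compatibility forces a suitable $\Pi_i$ at every nonempty node, suitability puts the ground instance $A\rho$ of the selected atom into $S_i$, condition~\ref{prop:cssld.complete.cond0} then covers it by a clause of $\Pi_i$, and the covering instance supplies both the child in $T$ and the new grounding satisfying $S$; the lifting bookkeeping you defer is standard. The genuine gap is the reduction of cases~\ref{prop:cssld.complete.cond2b} and~\ref{prop:cssld.complete.cond2c} to case~\ref{prop:cssld.complete.cond2a}: it is simply false that recurrence (or acceptability plus the Prolog rule) makes $T$ finite. Recurrence only guarantees that \emph{bounded} queries --- in particular ground ones --- have finite SLD-trees; the root $Q$ of $T$ may be non-ground and unbounded. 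For $P=\{\,p(0).\ \ p(s(X))\gets p(X).\,\}$, which is recurrent, the SLD-tree for $p(Y)$ contains the infinite branch $p(Y), p(X_1), p(X_2),\ldots$, and a csSLD-tree may be that whole tree. Your appeal to the Lifting Lemma runs in the wrong direction: finiteness of all derivations of $Q\theta$ says nothing about derivations of $Q$, since an infinite derivation of $Q$ need not project onto any derivation of $Q\theta$. The same objection applies verbatim to the acceptable/left-termination case.

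The repair is to abandon finiteness of $T$ and instead prove that the \emph{particular path you construct} is finite. Your construction already produces, alongside the path $N_0,N_1,\ldots$ in $T$, a sequence of ground queries $Q_{N_0}\rho_0, Q_{N_1}\rho_1,\ldots$ all of whose atoms lie in $S$, where each step replaces a ground atom $A\rho$ by the body $B_1,\ldots,B_l$ of a ground instance of a clause of $P$. Under recurrence, $|A\rho|>|B_j|$ for every $j$, so the multiset of levels of $Q_{N_m}\rho_m$ strictly decreases in the (well-founded) multiset ordering; hence the path terminates, and since every nonempty node on it can be extended, it terminates in the empty query. The acceptable case is analogous for the leftmost rule (this is where the Prolog-selection-rule hypothesis is actually consumed), using the usual measure for left-termination restricted to the ground sequence; here one must also note that $S\subseteq\M_P\subseteq S'$ is not available a priori, so the decrease $|H|>|B_i|$ is applied via correctness of $P$ w.r.t.\ $S'$ to the prefix $B_1,\ldots,B_{i-1}$ of the ground instance. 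Only in case~\ref{prop:cssld.complete.cond2a} do you argue the other way around: the constructed path cannot be infinite because $T$ is finite. Your induction ``on the finite tree $T$'' should accordingly be recast as induction on a well-founded measure of the ground instance (or as the explicit construction of a lift of a ground unrestricted derivation, which is how the author's companion proofs phrase it), so that it covers all three cases uniformly.
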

\begin{example}
\label{ex:pruning1}%
The following program SAT0 is a simplification of a fragment of the SAT solver 
of \cite{howe.king.tcs}
discussed in \cite{drabent12.iclp.shorter}.
Pruning is crucial for the efficiency and usability of the original program.
\begin{quote}
\vspace{-2.5\abovedisplayskip}%
\mbox{}\hspace{-2.5em}
\mbox{%
  \begin{minipage}[t]{.53\textwidth}%
    \begin{eqnarray}
      &&
\label{SCcl1}
      p( P\mydash P, \, [\,] ).
          \\
      &&
\label{SCcl2}
      p( V\mydash P, \, [B|T] ) \gets q( V\mydash P, \, [B|T] ).   \\
      &&
\label{SCcl3}
      p( V\mydash P, \, [B|T] ) \gets q( B, \, [V\mydash P|T] ).   
    \end{eqnarray}
  \end{minipage}
  \begin{minipage}[t]{.43\textwidth}%
    \begin{eqnarray}
      &&
\label{SCcl4}
      q( V\mydash P, \, \mbox{\LARGE\_}\, )\gets V=P.                 \\
      &&
\label{SCcl5}
      q(\, \mbox{\LARGE\_}\, , \, [A|T] )\gets p( A, T).              \\
      &&
\label{SCcl6}
      P = P.
    \end{eqnarray}
  \end{minipage}  
} %
\end{quote}
The program is complete w.r.t.\ a specification
\[
S =
\left.\left
\{\, 
\begin{array}{l}
  p(t_0\mydash u_0,[t_1\mydash u_1,\ldots, t_n\mydash u_n]),      \\
  q(t_0\mydash u_0,[t_1\mydash u_1,\ldots, t_n\mydash u_n])    
\end{array}
\, \right| \,
\begin{array}{l}
    n\geq0,
    \ t_0,\ldots,t_n,u_0,\ldots,u_n \in\mathbb{T},
    \\
    t_i=u_i \mbox{ for some } i\in \{0,\ldots,n\}\,
\end{array}
\right
\}\cup S_=
\]
where $\mathbb T = \{{\it false}, {\it true}\}\subseteq\HU$, and
$S_= = \{\, t{=}t \mid t\in\HU \,\}$.
We omit a completeness proof,
mentioning only that SAT0 is recurrent w.r.t.\ a level mapping
$|p(t,u)| = 2|u|+2$, 
$|q(t,u)| = 2|u|+1$, $|{=}(t,u)|=0$,
where $|u|$ is as in Ex.\,\ref{ex:split-compl}.

The first case of pruning is due to redundancy within 
(\ref{SCcl2}), (\ref{SCcl3});
 both
$\Pi_1={\rm SAT0}\setminus \{(\ref{SCcl3})\}$ and 
$\Pi_2={\rm SAT0}\setminus \{(\ref{SCcl2})\}$
are complete w.r.t.\ $S$.
For any selected atom at most one of (\ref{SCcl2}), (\ref{SCcl3}) is to be
used, and the choice is dynamic.
As the following reasoning is independent from this choice, we omit further
explanations.

So in 
such pruned SLD-trees
the children of each node are constructed using one of programs
$\Pi_1, \Pi_2$.
Thus they are csSLD-trees for $\Pi_1, \Pi_2$.
They are compatible with $\S= \Pi_1, \Pi_2, S, S$
(as $\Pi_1, \Pi_2$ are trivially suitable for any $A$, due to 
$S_i=S$ and $S\setminus S_i=\emptyset$ in Df.\,\ref{def:suitable}).
Each atom of $S$ is covered w.r.t.\ $S$ both by $\Pi_1$ and $\Pi_2$.
As SAT0 is recurrent,
by Th.\,\ref{th:completeness:pruned}, 
each such tree is complete w.r.t.\ $S$.

\end{example}

\begin{example}
\label{ex:pruning2}%
We continue with program SAT0 and specification $S$ from the previous
example, and add
 a second case of pruning.
When the selected atom is of the form $A=q(s_1,s_2)$ with a ground $s_1$ then
only one of 
clauses (\ref{SCcl4}), (\ref{SCcl5}) is needed --
(\ref{SCcl4}) when $s_1$ is of the form $t\mydash t$, and  (\ref{SCcl5})
otherwise.
The other clause can be abandoned without losing the completeness w.r.t.\ $S$.%
\footnote
{
 The same holds for $A$ of the form  $q(s_{11}\mydash s_{11},s_2)$, or
 $q(s_{11}\mydash s_{12},s_2)$ with non-unifiable $s_{11}$, $s_{12}$.
 The pruning is implemented using the if-then-else construct in
 Prolog: \
{\tt q(V-P,[A|T]) :-  V=P ->  true ; p(A,T).}
(And the first case of pruning by
{\tt p(V-P,[B|T]) :- non{}var(V) -> q(V-P,[B|T])  ; q(B,[V-P|T])}.)
}%

 Actually, SAT0 is included in a bigger program, say
 $P={\rm SAT0}\hspace{.5pt}\cup\Pi_0$.
 We skip the details of $\Pi_0$, let us only state that $P$ is recurrent,
 $\Pi_0$ does not
 contain any clause for $p$ or for $q$, and that
 $P$ is complete w.r.t.\ a specification $S'=S\cup S_0$ where
 $S_0$ does not contain any $p$- or $q$-atom.
 (Hence each atom of $S_0$ is covered by $\Pi_0$ w.r.t.\ $S'$.)

To formally describe the trees for $P$ resulting from both cases of pruning,
consider $\S = \Pi_0,\ldots,\Pi_5,S_0,\ldots,S_5$, where
\[
\begin{tabular}[t]{l @{\ \ }l}
   $\Pi_1 =  \{(\ref{SCcl1}),(\ref{SCcl2})\}$, \
    $\Pi_2 =  \{(\ref{SCcl1}),(\ref{SCcl3})\}$,
&       $S_1=S_2=S\cap \{\,p(s,u)\mid s,u\in\HU\,\}$,
\\
  $\Pi_3 =  \{(\ref{SCcl4})\}$,
&       $S_3=S\cap\{\,q(t\mydash t,s)\mid t,s\in\HU \,\}$,
\\
   $\Pi_4 =  \{(\ref{SCcl5})\}$,
&       $S_4= S\cap\{\,q(t\mydash u,s)\mid  t,u,s\in\HU,t\neq u \,\}$,
\\
   $\Pi_5 =  \{(\ref{SCcl6})\}$,
&  $S_5 = S_=$.
\end{tabular}
\]
 Each atom from $S_i$ is covered by $\Pi_i$ w.r.t.\ $S$ (for $i=0,\ldots,5$).
For each $q$-atom with its first argument ground,
$\Pi_3$ or $\Pi_4$ (or both) is suitable.
For each remaining atom from \TB,
a program from $\Pi_0,\Pi_1,\Pi_2,\Pi_5$ is suitable.

Consider a pruned SLD-tree $T$ for $P$ (employing the two cases of pruning
described above).  
Assume that each $q$-atom selected in $T$ has its first argument ground.
Then $T$ is a csSLD-tree compatible with \S.
From Th.\,\ref{th:completeness:pruned} it follows that $T$ is complete
w.r.t.\ $S$.

The restriction on the selected $q$-atoms can be
implemented by means of Prolog delays.  This can be done in such a way 
that floundering is avoided for the intended initial queries
\cite{howe.king.tcs}.
So the obtained pruned trees are as $T$ above, and
 the pruning preserves completeness of the program.
\end{example}

\begin{example}
A Prolog program
$
\{
n o p(a d a m,0) \gets {!}.
\ \
n o p(eve,0)  \gets {!}.
\ \
n o p(X,2).
\}
$
is an example of difficulties and dangers of using the cut in Prolog.
Due to the cut, for an atomic query $A$
only the first clause with the head unifiable with $A$ will
be used.
The program can be seen as logic program $P=\Pi_1\cup\Pi_2\cup\Pi_3$ executed
with pruning, where (for $i=1,2,3$)
$\Pi_i$ is the $i$-th clause of the program
with the cut removed.
The intended meaning is $S=S_1\cup S_2\cup S_3$, where 
$S_1 = \{ n o p(a d a m,0) \}$, 
$S_2 = \{ n o p(eve,0) \}$, and
$S_3 = \left\{ \rule{0pt}{1.9ex}
     n o p(t,2)\in\HB \mid t\not\in\{a d a m,eve\} \right\}$.
Note that all the atoms from $S_i$ are covered by $\Pi_i$ (for $i=1,2,3$).
(We do not discuss here the (in)correctness of the program.)

Let \S be $\Pi_1,\Pi_2,\Pi_3,S_1, S_2, S_3$.  
Consider a query $A=n o p(t,Y)$ with a ground $t$.
If $t=a d a m$ then only $\Pi_1$ is suitable for $A$ w.r.t.\ \S,
if $t=eve$ then only  $\Pi_2$ is.
For $t\not\in\{a d a m,eve\}$ the suitable program is $\Pi_3$.
So for a query $A$ the pruning due to the cuts in the program results in 
selecting a suitable $\Pi_i$, and the obtained
csSLD-tree compatible with \S. 
By Th.\,\ref{th:completeness:pruned}
the tree is complete w.r.t.\ $S$.

For a query $n o p(X,Y)$ or $n o p(X,0)$ only the first clause, i.e.\ $\Pi_1$,
is used.  However $\Pi_1$ is not suitable for the query (w.r.t.\ \S),
and the csSLD-tree is not compatible with \S.
The tree is not complete (w.r.t.\ S).
\end{example}

\subsection{The cut in the last clause.}

The previous approach is based on a somehow abstract semantics
in which pruning is viewed as clause selection.
Now we present an approach referring directly to Prolog with the cut.
However the usage of the cut is restricted to the last clause of a procedure.
The author expects that the general case could be conveniently studied 
in the context of programs with negation
(because if $H\gets A_1,!,A_2$ is followed by $H\gets A_3$ then the latter
clause is used only if $A_1$ fails).
We consider LD-resolution, as interaction of the cut
with delays introduces additional complications.

We need to reason about the atoms selected in the derivations. 
So we employ 
a (non-declarative) approach
to reason about LD-derivations, presented in \cite{Apt-Prolog}.
A specification in this approach,
let us call it {\bf call-success specification}, is a pair $pre,post\in\TB$
of sets of atoms, closed under substitution.
A program is {correct} w.r.t.\ such specification,
let us say {\bf c-s-correct}, when 
in each LD-derivation
every selected atom is from $pre$
and each corresponding computed answer is in $post$,
provided that the derivation begins with an atomic query from $pre$.
See \cite{Apt-Prolog} or \cite{DBLP:journals/tplp/DrabentM05shorter}
for further explanations, and for a sufficient criterion for c-s-correctness.

By $\vars(E)$ we denote the set of variables occurring in an expression $E$.
For a substitution $\theta = \{X_1/t_1,\ldots,X_n/t_n\}$, 
let 
$do m(\theta)=  \{X_1,\ldots,X_n\}$, and $r n g(\theta)=\vars(\{\seq t\})$.

\begin{df}\rm
\label{def:adjustably}
Let $S$ be a specification, and  $pre,post$ a call-success specification.
  A ground atom $A$ is {\bf adjustably covered} by a clause $C$
w.r.t.\ $S$ and $pre,post$
 if $A$ is covered by $C$ and the cut does not occur in $C$, or 
the following three conditions hold:
\begin{enumerate}
\item 
$C$ is $H\gets \seq[k-1]A,!,A_k,\ldots,A_n$, 

\item
$A$ is covered by $H\gets A_1,\ldots,A_{k-1}$ w.r.t.\ $S$,

\item
\label{def:adjustably:tough:condition}
\begin{itemize}
\item 
 for any instance $H\rho\in pre$  such that $A$ is an instance of $H\rho$,

\item 
for any ground instance  $(A_1,\ldots,A_{k-1})\rho\eta$ such that
 $ A_1\rho\eta,\ldots,A_{k-1}\rho\eta\in post$,

\item
$A$ is covered by 
 $(H \gets A_k,\ldots,A_n)\rho\eta$ w.r.t.\ $S$,

\end{itemize}

where
 $do m(\rho)\subseteq \vars(H)$, $r n g(\rho) \cap \vars(C) \subseteq \vars(H)$, 
 $do m(\rho)\cap r n g(\rho) = \emptyset$,
 and $do m(\eta) = \vars((A_1,\ldots,A_{k-1})\rho)$.
{\sloppy\par}
\end{enumerate}
\end{df}

Informally, condition \ref{def:adjustably:tough:condition} says that 
$A$ could be produced out of each ``related'' answer for $\seq[{k-1}]A$,
and some answer for $A_k,\ldots,A_n$ specified by $S$.
Note that 
if $A$ is adjustably covered by $C$ w.r.t.\  $S$, $pre,post$,
where  $S\subseteq post$,
then $A$ is covered by $C$ w.r.t.\  $S$.
\label{th:completeness:cut:simplification}
If condition \ref{def:adjustably:tough:condition} holds for an $H\rho$ then it
holds for any its instance $H\rho\delta$ of which $A$ is an instance.
For a proof of this property and of the theorem below, see Appendix.

\begin{theorem}\rm
\label{th:completeness:cut}
Let $S$  be a specification, $pre,post$ a call-success specification,
where $S\subseteq post$.
Let $T$ be a pruned LD-tree for a program $P$ and an atomic query $Q$, where pruning is
due to the cut occurring in the last clause(s) of some procedure(s) of $P$.
If 

 -- \  $T$ is finite, $Q\in pre$, $P$ is c-s-correct w.r.t.\ $pre,post$, and 

 -- \  each $A\in S$ is adjustably covered by a clause of $P$ w.r.t.\  $S$ and
    $pre,post$
 \\
then $T$ is complete w.r.t.\ $S$.
\end{theorem}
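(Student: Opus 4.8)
The plan is to reduce Theorem~\ref{th:completeness:cut} to Theorem~\ref{th:completeness:pruned} (pruning as clause selection) by reconstructing, from the cut-induced pruning, an appropriate split \S and verifying its hypotheses. First I would dispose of the ground atom of interest: fix $A\in S$ with $P\models A$ (equivalently, by soundness/completeness of SLD-resolution, $A$ an answer) and observe we must exhibit $A$ as an instance of an answer of $T$. Since $S\subseteq post$ and $A$ is adjustably covered, the remark on p.~\pageref{th:completeness:cut:simplification} gives that $A$ is ordinarily covered by a clause of $P$ w.r.t.\ $S$; so the covering hypothesis of the clause-selection theorems is available. The work is to show that the particular pruned tree $T$, which discards a last clause $H\gets\SEQ A1{k-1},!,A_k,\ldots,A_n$ at a node whenever the goal $\SEQ A1{k-1}$ (run under LD-resolution from the matching call) succeeds, can be presented as a csSLD-tree that is \emph{compatible} with a split for which condition~\ref{prop:cssld.complete.cond0} of Theorem~\ref{th:completeness:pruned} holds.

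Concretely, for each procedure $r$ whose last clause $C_r$ contains a cut, I would split the program at that procedure into the ``pre-cut'' program (the other clauses of $r$, plus all clauses for all other predicates) and, optionally, the program including $C_r$; the associated specification pieces are $S$ restricted to $r$-atoms that are covered \emph{without} using $C_r$, versus the rest. Compatibility then has to be checked case by case on a selected atom $B$: if $B$ is an $r$-atom whose call pattern forces $\SEQ A1{k-1}$ to succeed (so the cut fires and $T$ keeps only the pre-cut alternatives below that node), I must show every instance of $B$ lying in $S$ is already covered by the pre-cut clauses --- and this is exactly what condition~\ref{def:adjustably:tough:condition} of Df.~\ref{def:adjustably} buys us, via c-s-correctness: the call $B$ is an instance of some $H\rho\in pre$, the computed answers for $\SEQ A1{k-1}$ are in $post$, hence each ground $A$ that is an instance of $B$ and lies in $S$ is covered by the instantiated remainder $(H\gets A_k,\ldots,A_n)\rho\eta$, i.e.\ by a clause of $P$ other than $C_r$ in the relevant instance, so $S_i$ is ``suitable'' for $B$ in the sense of Df.~\ref{def:suitable}. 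Where the cut does \emph{not} fire, $T$ retains the full program at that node and compatibility is immediate.

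Having exhibited $T$ as a csSLD-tree compatible with this split, with all $S_i$ covered by the corresponding $\Pi_i$ w.r.t.\ $S$, and with $T$ finite by hypothesis, Theorem~\ref{th:completeness:pruned} (case~\ref{prop:cssld.complete.cond2a}) yields that $T$ is complete w.r.t.\ $S$, which is the claim. Two technical points need care. First, $T$ is an LD-tree, so the c-selection rule extracted from the cut really does pick, at every nonempty node, one of the $\Pi_i$ (never $\emptyset$): this requires that whenever the cut prunes, the remaining pre-cut program is the one ``suitable'' for the selected atom --- again supplied by condition~\ref{def:adjustably:tough:condition}. Second, the cut may occur in several procedures simultaneously and the pre-cut/post-cut choices must be made coherently along every branch; I would handle this by indexing the $\Pi_i$ over all admissible combinations of ``cut fired / not fired'' per cut-bearing procedure, so that the c-selection rule induced by the actual LD-execution of $T$ always lands in the family. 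The main obstacle is precisely this bookkeeping: translating the operational, branch-dependent behaviour of the cut into a single static split and c-selection rule, and checking that condition~\ref{def:adjustably:tough:condition} --- with its quantification over $\rho$ and $\eta$ constrained by $pre$ and $post$ --- is exactly strong enough to guarantee suitability at every node where pruning occurs. The finiteness of $T$ and c-s-correctness of $P$ are used only to license this translation; the completeness conclusion itself is then a black-box application of Theorem~\ref{th:completeness:pruned}.
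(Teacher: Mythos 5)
Your reduction to Th.\,\ref{th:completeness:pruned} does not go through, because the pruning performed by a cut in the \emph{last} clause is not of the kind that theorem handles. The cut in $C = H\gets A_1,\ldots,A_{k-1},!,A_k,\ldots,A_n$ does not discard $C$ (or any sibling clause) at the node where $C$ is applied --- $C$ is the last clause of its procedure, so nothing is pruned there. What it discards are the \emph{alternative solutions of the guard} $(A_1,\ldots,A_{k-1})\theta_i$: once one branch through the guard succeeds, the remaining children of nodes \emph{inside the guard's subtree} are cut away. To present $T$ as a csSLD-tree compatible with a split you would need, at each such internal node with selected atom $B$, a $\Pi_i$ consisting of the retained clauses and an $S_i$ suitable for $B$ (hence with $ground(B)\cap S\subseteq S_i$) all of whose atoms are covered by $\Pi_i$ w.r.t.\ $S$ (condition \ref{prop:cssld.complete.cond0} of Th.\,\ref{th:completeness:pruned}). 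This fails in general: which clauses survive is determined by which guard solution happens to be found first, and that solution need not be in $S$ at all --- only in $post$. The appendix example makes this concrete: for $p(X,Z)\gets q(X,Y),!,r(Y,Z)$ with facts $q(a,a)$, $q(a,a')$, $q(b,b)$, $r(a,c)$, $r(a',c)$, where $q(a,a')\in S$ but $q(a,a)\in post\setminus S$, the cut retains only the branch through $q(a,a)$; the retained clause set at the $q(a,Y)$-node is $\{q(a,a)\}$, which does not cover $q(a,a')\in ground(q(a,Y))\cap S$. So no split satisfying the hypotheses of Th.\,\ref{th:completeness:pruned} exists, yet the theorem to be proved holds for this program. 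The compensation provided by condition \ref{def:adjustably:tough:condition} of Df.\,\ref{def:adjustably} is inherently non-local: it says that the \emph{remainder} $A_k,\ldots,A_n$, instantiated by whatever guard answer survives (constrained only through $pre$, $post$ via c-s-correctness), can still produce $A$; this cannot be recast as a per-node coverage condition on static subprograms.

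The paper's proof is accordingly direct rather than a reduction. It first characterizes the shape of the surviving path once a cut fires (a single path from $Q_i$ to $(A_k,\ldots,A_n,Q')\theta_i\cdots\theta_j$), uses semi-completeness together with finiteness of $T$ to show the guard has a successful branch not preceded by an infinite one, uses c-s-correctness to place the surviving guard answer in $post$, applies condition \ref{def:adjustably:tough:condition} to cover $A\sigma_{i-1}$ by the instantiated remainder, and then packages the whole guard computation as a single macro resolution step (introducing a fresh predicate and an auxiliary two-clause program) so that the lifting theorem gives the resultant relationship; an induction along the branch then yields $Q\theta$ as an instance of an answer of $T$. If you wish to keep a modular argument, you would first have to generalize Th.\,\ref{th:completeness:pruned} so that coverage of an atom may be certified by clause instances constrained through $post$ rather than $S$ --- which is essentially what Df.\,\ref{def:adjustably} already encodes, so nothing is saved.
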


\begin{example}
Consider a program  IN  and specifications:
\[
\begin{array}{l}
    \begin{array}[b]{l}
      in([\,],L). \\
      in([H|T],L) \gets m(H,L), !, in( T, L).
    \end{array}
    \qquad
    \begin{array}[b]{l}
      m(E,[E|L]). \\
      m(E,[H|L]) \gets m(E,L).
    \end{array}
\end{array}
\]
\[
\begin{array}{l}
 S = S_m\cup S_{in},\ \  pre=pre_m\cup pre_{in}, \ \
post=post_m\cup post_{in}, \mbox{ \ where}
\\[.5ex]
pre_{m}= \{\, m(u,t)\in\TB \mid t \mbox{ is a list} \,\},
\\
pre_{in}= \{\, in(u,t)\in\HB \mid u,t \mbox{ are ground lists} \,\},
\\
post_{m}= \{\, m(t_i, [\seq t] )\in\TB \mid 1\leq i\leq n \,\},
\\
post_{in}= \{\, in([\seq[m]u], [\seq t] )\in\HB  \mid
            \{\seq[m]u\}\subseteq\{\seq t\} \,\},
\\
S_m = post_m\cap\HB, \ S_{in} = post_{in}.
\end{array}
\]
The program is c-s-correct w.r.t.\ $pre,post$ (we skip a proof).
We show that each atom 
 $A = in( u,t )\in S_{in}$, where $u=[\seq[m]u]$, $m>0$,
 is adjustably covered 
by the second clause $C$ of IN.  Let $C_0$ be $in([H|T],L) \gets m(H,L)$.
Now $A$ is covered by $C_0$ w.r.t.\ $S$ 
($A\gets m(u_1,t)$ is a relevant ground instance of $C_0$).

Take an instance $in([H|T],L)\rho\in pre$ of the head of $C$.
The instance is ground, and the whole $C\rho$ is ground.  
So in Df.\,\ref{def:adjustably}, $\rho\eta=\rho$.
If  $A$ is an instance of (thus equal to) $in([H|T],L)\rho$
then $in(T,L)\rho = in([u_2,\ldots,u_m],t)\in S$ (as $A\in S$).
Thus $A$ is covered by $(in([H|T],L)\gets in(T,L))\rho$.

Thus $A$ is adjustably covered by $C$.  It is easy to check that 
all the remaining atoms of $S$ are covered by IN w.r.t.\ $S$, 
and that IN is recurrent
(for $|m(s,t)|=|t|$, $|in(s,t)|=|s|+|t|$, $|t|$ as in Ex.\,\ref{ex:split-compl}).
Thus each LD-tree for IN and a query $Q\in pre$ is finite.
By Th.\,\ref{th:completeness:cut}, each such tree pruned due to the cut
is complete w.r.t.\ S.
Notice that condition \ref{def:adjustably:tough:condition} does not hold
when non ground arguments of $in$ are allowed in $pre_{in}$, 
and that for
such queries some answers may be pruned.

\end{example}

\section{Discussion}

\paragraph{Declarativeness.}

Without declarative ways of reasoning about correctness and completeness of
programs, logic programming 
would not deserve to be called a declarative programming paradigm.
The sufficient condition for proving correctness (Th.\,\ref{th:correctness}),
that for semi-completeness of Th.\,\ref{th:completeness}, 
and those for completeness of  
Prop.\,\ref{prop:semi-compl}.\ref{prop:semi-compl:recu} 
and Th.\,\ref{th:completeness:recu} are declarative.
However 
the sufficient conditions for completeness of
 Prop.\,\ref{prop:semi-compl}.\ref{prop:semi-compl:term} and
 \ref{prop:semi-compl}.\ref{prop:semi-compl:accept} are not, 
as they refer to program termination, or depend on the order of atoms in 
clause bodies.  

 Declarative completeness proofs employing 
Prop.\,\ref{prop:semi-compl}.\ref{prop:semi-compl:recu} 
or Th.\,\ref{th:completeness:recu}
imply termination, or require reasoning similar to that in termination proofs.
So proving completeness by means of semi-completeness and termination 
may be a reasonable compromise
between declarative and non-declarative reasoning,
as termination has to be shown anyway in most of practical cases.

\paragraph{Granularity of proofs.}
Note that the sufficient condition for correctness deals with single clauses,
that for semi-completeness -- with procedures, and those for completeness
take into account a whole program.

\paragraph{Incompleteness diagnosis.}
There is a close relation between completeness proving and
incompleteness diagnosis \cite{Shapiro.book}.  As the reason of
incompleteness, a diagnosis algorithm finds an atom from $S$ that is not
covered by the program.  Thus it finds a reason for violating the sufficient
conditions for semi-completeness and completeness of
 Th.\,\ref{th:completeness}, \ref{th:completeness:recu}.

\paragraph{Approximate specifications.}
We found that approximate specifications are crucial in avoiding unnecessary
complications in dealing with correctness and completeness of programs
(cf.\ Sect.\,\ref{par:approximate}, 
\cite{DBLP:journals/tplp/DrabentM05shorter,%
      drabent12.iclp.shorter,drabent.report2014}).
For instance, in the main example of \cite{drabent12.iclp.shorter}
(and in its simpler version in Ex.\,\ref{ex:pruning1},\,\ref{ex:pruning2})
finding an exact specification is not easy, and is unnecessary.  
The required property of the program is described more conveniently by an
approximate specification.
Moreover, as this example shows, in program development the
semantics of (common predicates in) the consecutive versions of a program may
differ.  What is unchanged is correctness and completeness w.r.t.\ an
approximate specification.

{\em Approximate specifications in program development.}
This suggests a generalization of the paradigm of
program development by semantics preserving program transformations
 \cite{DBLP:journals/jlp/PettorossiP94}:
it is useful and natural to use transformations which only preserve correctness
and completeness w.r.t.\ an approximate specification.
{\em Approximate specifications in debugging.}
In declarative diagnosis \cite{Shapiro.book}
the programmer is required to know the exact intended semantics of the
program.  This is a substantial obstacle to using declarative diagnosis
in practice.
Instead, an approximate specification can be used,
 with the specification for correctness
(respectively completeness) 
applied in incorrectness (incompleteness)
diagnosis.  See  \cite{drabent.report2014} for discussion and references.

\paragraph{Interpretations as specifications.}
\label{par:specifications:interpretations}
This work uses specifications which are interpretations.
(The same kind of specifications is used, among others,
in \cite{Apt-Prolog}, and in declarative diagnosis.)
There are however properties which cannot be expressed by
such specifications \cite{DBLP:journals/tplp/DrabentM05shorter}.  
 For instance one cannot express that some instance of an
atomic query $A$ should be an answer; one has to specify the actual instance(s).
Other approach is needed for such properties, possibly with specifications
which are logical theories (where axioms like $\exists X.\, A$ can be used).

\paragraph{Applications.}
\label{par:applications}
We want to stress the simplicity and naturalness of the sufficient conditions
for correctness (Th.\,\ref{th:correctness}) and semi-completeness 
(Th.\,\ref{th:completeness},
the condition
 is a part of each discussed sufficient condition for completeness).
Informally, the first one says that the clauses of a program should produce
only correct conclusions, given correct premises.  
  The other says that each ground atom that should be produced by $P$ 
  can be produced by a clause of $P$
  out of atoms produced by $P$.
The author believes that this is a way a competent programmer reasons about 
(the declarative semantics of) a logic program. 

Paper \cite{drabent12.iclp.shorter} illustrates practical applicability of the
methods presented here.
It shows a systematic construction of a non-trivial Prolog program
(the SAT solver of \cite{howe.king.tcs}).  
Starting from a formal specification, a definite clause logic program
is constructed hand in hand with proofs of its correctness, completeness,
and termination under any selection rule.
The final Prolog program is obtained by adding control to the logic program
(delays and pruning SLD-trees).
Adding control preserves correctness and termination.
However completeness may be
violated by pruning, and by floundering related to delays.
By Th.\,\ref{prop:cssld.complete}, 
the program with pruning remains complete.%
\footnote{%
In \cite{drabent12.iclp.shorter}
 a weaker version of Th.\,\ref{prop:cssld.complete}
was used, and one case of pruning was discussed informally. 
A proof covering both cases of pruning is illustrated here in
Ex.\,\ref{ex:pruning2}. 
}
Proving non-floundering is outside of the scope of this work.
See \cite{king.non-suspension2008} for a related analysis algorithm, 
 applicable in this case \cite{king.personal}.

The example shows how well
   ``logic'' could be separated from ``control.''  
The whole reasoning related to correctness and completeness can be done 
declaratively, abstracting from any operational semantics.

\paragraph{Future work.}
A natural continuation is developing completeness proof methods for programs
with negation
(a first step was made in \cite{DBLP:journals/tplp/DrabentM05shorter}),
maybe also for constraint logic programming and CHR (constraint handling
rules). 
Further examples of proofs are necessary.
An interesting task is formalizing and automatizing the proofs, a first step
is formalization of specifications.

\subsubsection{Conclusion.}
Reasoning about completeness of logic program has been, surprisingly,
almost neglected.
This paper presents a few sufficient conditions for completeness.
As an intermediate step we introduced a notion of semi-completeness.
The presented methods are, to a large extent, declarative.
Examples suggest that the approach is applicable -- maybe at informal level --
in practice of Prolog programming. 
The approach is augmented by
 two methods of proving completeness in presence of pruning.

\appendix
\section*{Appendix}
\label{sec:appendix}
\pdfbookmark[1]{Appendix}{bookmark:appendix}

The appendix contains a proof of 
Th.\,\ref{th:completeness:cut} and of a property stated on
p.\,\pageref{th:completeness:cut:simplification}.  We begin with an
additional example of applying Th.\,\ref{th:completeness:cut}.

   \begin{example}
   Consider a program $P$:
   \[
   p(X,Z) \gets q(X,Y), !, r(Y,Z).
   \qquad\quad
   \begin{array}[t]{l}
     q(a,a) \\ q(a,a') \\ q(b,b) 
   \end{array}
   \qquad\quad
   \begin{array}[t]{l}
     r(a,c) \\   r(a',c) 
   \end{array}
   \vspace{-1\belowdisplayskip}
   \]
   and specifications
   \[
   \begin{array}{l}
       S=\{\,p(a,c),  q(a,a'),q(b,b),   r(a,c),   r(a',c) \,\}, \\
       post = S\cup\{q(a,a)\}, \\
       pre = \{\, p(a,t) \mid t\in\TU \,\} \cup
        \{\, q(a,t) \mid t\in\TU \,\} \cup
        \{\, r(t,u) \mid t,u\in\TU \,\}
   \end{array}
   \]
   The program is c-s-correct w.r.t.\ $pre,post$ (we skip a proof).
   To check that 
   atom $p(a,c)\in S$ is adjustably covered by the first clause of $P$, 
   note first that it is covered w.r.t.\ $S$ by 
   $p(a,c)\gets q(a,a')$.
   It is sufficient to check 
   condition \ref{def:adjustably:tough:condition} of Df.\,\ref{def:adjustably}
   for $\rho=\{X/a\}$,
   as $p(X,Z)\rho=p(a,Z)$ is a most general $p$-atom in $pre$
   (cf.\ Lemma \ref{lemma:instances:adjustably:tough:condition} below).
   If  $q(X,Y)\rho\eta \in post$ then $\eta=\{Y/a\}$ or $\eta=\{Y/a'\}$.
   Hence $r(Y,Z)\rho\eta$ is $r(a,Z)$ or $r(a',Z)$.
   In both cases, 
   $p(a,c)\gets r(Y\eta,c)$ is a ground instance of
   $(p(X,Z)\gets r(Y,Z))\rho\eta$ covering $p(a,c)$ w.r.t.\ $S$.

   The remaining atoms of $S$ are trivially covered by the unary clauses of $P$.
   The LD-tree for $P$ and $Q=p(a,Z)$ is finite, hence the LD-tree pruned due to
   the cut is complete w.r.t.\ $S$ by Th.\,\ref{th:completeness:cut}.

   \end{example}

Before the proof of Th.\,\ref{th:completeness:cut}.
let us discuss how the cut works.
We treat Prolog programs as definite programs, the same for queries.
The cut is considered as additional control information.
We however often write symbol  ! in queries, to remind the original position
of the cut in a program clause.

Assume a Prolog program $P$, which is a logic program in which, additionally,
the cut may occur in the last clause of a procedure.
Consider a pruned LD-tree $T$ resulting from pruning an LD-tree
$T_0$ due to the cut.
The cut is involved whenever a query $Q_{i-1}$ has a child $Q_i$:
\[
\begin{array}{l}
Q_{i-1}=A,Q', \\  
Q_i = (\seq[k-1]A,!,A_k,\ldots,A_n,Q')\theta_i,  \\
\end{array}
\]
where $H\gets \seq[k-1]A,!,A_k,\ldots,A_n$ is the clause variant used and 
$\theta_i$ is an mgu of $A$ and $H$.
Note first that the cut introduced in $Q_i$ may affect only the subtree of
$T_0$ rooted in  $Q_i$ (as the clause with the cut is the last in its
procedure). 
The top part of the subtree of $T_0$ rooted in  $Q_i$ can be seen as the
LD-tree $T'$ for $(\seq[k-1]A)\theta_i$
 (with an instance of $A_k,\ldots,A_n,Q'$ added to each query of $T'$).
If  $T'$ contains no success then no pruning is performed due to the cut in
$Q_i$.
Also, no pruning happens when there is an infinite branch in $T'$
preceding all the success branches.
Otherwise, pruning is performed and all the successes, but one, are pruned
away.
More precisely, exactly one path remains not pruned, out of all the paths in
$T_0$ beginning in $Q_i$  of the form  
\[
\begin{array}{l}
Q_i =(\seq[k-1]A,!,A_k,\ldots,A_n,Q')\theta_i,
 \\
Q_{i+1} =
(\ldots,A_2,\ldots,A_{k-1},!,A_k,\ldots,A_n,Q')\theta_i\theta_{i+1},  \\
\cdots \\
Q_{j-1} = (A',!,A_k,\ldots,A_n,Q')\theta_i\cdots\theta_{j-1},  
\\
Q_j = (A_k,\ldots,A_n,Q')\theta_i\cdots\theta_{j},  
\end{array}
\]
(where each query $Q_i,\ldots,Q_{j-1}$ contains more atoms than $Q_j$ does).

Strictly speaking, it was assumed here that no cut is introduced in any query
$Q_l$ ($i<l<j$).
To deal with such extra cuts, notice that the same reasoning applies recursively
(i.e.\ by induction on the number of cuts introduced within $Q_i,\ldots,Q_j$).
So we showed that:
\begin{equation}
\label{property:pruned:tree}
 \parbox{.83\textwidth}
 {\color{black}
    If the LD-tree $T'$ for $(\seq[k-1]A)\theta_i$ contains a successful branch, 
    not preceded by an infinite branch,
    then the pruned tree $T$ contains a path
    $Q_{i-1},\ Q_i,\ \ldots, \ (A_k,\ldots,A_n,Q')\theta_i\cdots\theta_{j}$.

    Otherwise no pruning occurs due to the cut introduced in $Q_i$.
    (The cut is not executed. No success leaf is a descendant of $Q_i$ in $T$.) 
 }
\end{equation}

\begin{proof}
 [of Th.\,\ref{th:completeness:cut}]
\ 
As each atom of $S$ is covered by $P$ w.r.t.\ $S$, 
$P$ is semi-complete w.r.t.\ $S$  by Th.\,\ref{th:completeness}.
Consider the LD-tree $T_0$ for $P$ and $Q$, and the finite pruned LD-tree $T$.
Without loss of generality we can assume that the employed mgu's are
idempotent and relevant
\cite[p.\,37--38]{Apt-Prolog}.

Consider a ground instance $Q\theta$ of $Q$ such that $S\models Q\theta$.
In the proof of  Th.\,\ref{th:completeness}
(cf.\ \cite[Th.\,4]{drabent.corr.2012}),
 out of a ground successful derivation $D$ for
$Q\theta$ a lift \cite[Df.\,5.35]{Doets} was constructed, which was a branch
of the tree for $Q$. 
Each atom occurring in $D$ was from $S$.
Here such a ground derivation may not exist. 
Instead we construct a lift (for a superset of $P$) which consists of some nodes
of a successful branch of $T$.   
Roughly speaking, a fragment of computation involving the cut will be
represented by three nodes in the lift.

We first prove the following property, which is the inductive step of the
main proof. 
\begin{quotation}
If $Q_{i-1}=A,Q'$ is a node in the pruned tree $T$, with a ground
instance  $Q_{i-1}\sigma_{i-1}$ such that  $S\models Q_{i-1}\sigma_{i-1}$
then there exists in $T$ a descendant  $Q_{j}$ of  $Q_{i-1}$ with a ground instance 
$Q_{j}\sigma_{j}$, 
such that $S\models Q_{j}\sigma_{j}$.

 Moreover, $Q_{i-1},Q_{j}$ are the first and the last query of  an LD-derivation
 $D_{i,j}$
 (for a $P'\supseteq P$) which  is a lift of  
an unrestricted derivation \cite[Df.\,5.35]{Doets}
beginning with $Q_{i-1}\sigma_{i-1}$ and ending with $Q_{j}\sigma_{j}$.
 Also, the resultant of $D_{i,j}$ is the same
 as the  
 resultant of the derivation  $Q_{i-1},Q_i,\ldots,Q_{j}$ for $P$, which is a
 path in $T$ between  $Q_{i-1}$ and $Q_{j}$.

\end{quotation}
\noindent
If the selected atom  $A\sigma_{i-1}$ of $Q_{i-1}\sigma_{i-1}$
is covered by a clause without the cut
then the proof follows that of Th.\,\ref{th:completeness}
\cite[Th.\,4]{drabent.corr.2012};
$Q_j$ is a child of   $Q_{i-1}$ in $T$.

The main part of this proof deals with the case when
a ground instance  $A\sigma_{i-1}$ of the selected atom
is adjustably covered by a clause $C = H\gets \seq[k-1]A,!,A_k,\ldots,A_n$
(and thus covered by $C$).
Without loss of generality we may assume that $C$ is a clause variant used in
the resolution step.
Let $Q_{i-1}=A,Q'$.
Then $A\in pre$, $A$ is unifiable with $H$ with an mgu $\theta_i$, 
$H\theta_i\in pre$,
atom $A\sigma_{i-1}$ is an instance of $H\theta_i$,
and 
$Q_i=(\seq[k-1]A,!,A_k,\ldots,A_n,Q')\theta_i$ is a child of $Q_{i-1}$
(in $T_0$ and in $T$).
{\sloppy\par}

Some ground instance of $Q_i$ consists of atoms from $S$;
hence $S\models\exists(\seq[k-1]A)\theta_i$. 
As $P$ is semi-complete w.r.t.\ $S$,  the LD-tree $T'$ for
$(\seq[k-1]A)\theta_i$ has an infinite, or a successful branch.
Assume that each successful branch of $T'$ is preceded by an infinite one.
Then, by (\ref{property:pruned:tree}),
 the cut introduced in $Q_i$ is not executed, and 
$T$ contains an infinite branch, contradiction.
So $T'$ has a successful branch not preceded by an infinite one. 
Hence, by (\ref{property:pruned:tree}),
 $T$ contains a path, starting in $Q_i$, of the form
{\sloppy\par}

$$
\begin{array}{l}
Q_i =(\seq[k-1]A,!,A_k,\ldots,A_n,Q')\theta_i,
 \\
Q_{i+1} =
(\ldots,A_2,\ldots,A_{k-1},!,A_k,\ldots,A_n,Q')\theta_i\theta_{i+1},  \\
\cdots \\
Q_j = (A_k,\ldots,A_n,Q')\theta_i\cdots\theta_{j},  \\
\end{array}
$$
where $\theta_{i+1},\ldots,\theta_j$ are the used mgu's,
and each query $Q_i,\ldots,Q_{j-1}$ contains more atoms than $Q_j$ does.

Let $\phi=\theta_{i+1}\cdots\theta_{j}$. We have
$A_1\theta_i\phi,\ldots, A_{k-1}\theta_i\phi\in post$
(hence all ground instances of $A_1\theta_i\phi,\ldots, A_{k-1}\theta_i\phi$
are in $post$).
To apply
condition \ref{def:adjustably:tough:condition} of Df.\,\ref{def:adjustably},
take $\rho = \restrict{\theta_i}{C}$ (the restriction of $\theta_i$ to the
variables of $C$).
Then  $do m(\rho)\subseteq vars(H)$,
$r n g(\rho) \cap vars(C) \subseteq vars(H)$ (as $\theta_i$ is a relevant
unifier of $A$ and $H$),
$do m(\rho)\cap r n g(\rho) = \emptyset$ (as $\theta_i$ is idempotent)
and, obviously, $C\theta_i=C\rho$.
By  condition \ref{def:adjustably:tough:condition} of Df.\,\ref{def:adjustably}
(with $\eta = \restrict\phi{(\seq[{k-1}]A)\rho}$),
$A\sigma_{i-1}$ is covered by 
$(H \gets A_k, \ldots, A_n)\rho\eta$, which is
$(H \gets A_k, \ldots, A_n)\theta_i\phi$.
So $A\sigma_{i-1} = H\theta_i\phi\sigma'$ and
$A_k\theta_i\phi\sigma', \ldots, A_n\theta_i\phi\sigma' \in S$,
for some $\sigma'$.
Let $\psi= \theta_i\phi\sigma'$.

 Let us now have a different look at the derivation $Q_{i-1},\ldots,Q_j$.
Let us introduce a new predicate symbol $p$ and and treat $\seq[k-1]A$ as
terms. 
Consider
\[
\begin{array}{l}
Q_{i-1}=A,Q' \\  
Q_i' =( p(\seq[k-1]A),!,A_k,\ldots,A_n,Q')\theta_i,  \\
Q_j = (A_k,\ldots,A_n,Q')\theta_i\phi.  \\
\end{array}
\]
It 
is a derivation for a program $\{C',C''\}$, where
\[
\begin{array}{l}
C' =  H\gets p(\seq[k-1]A),!,A_k,\ldots,A_n, \\
C'' =  p(\seq[k-1]A)\theta_i\phi.
\end{array}
\]
The mgu's used are $\theta_i$ and $\phi$.
We construct an unrestricted derivation
 $\Gamma$ for $\{C',C''\}$  \cite[Df.\,5.9]{Doets},
so that derivation $\Delta =Q_{i-1}, Q_{i}, Q_{j}$ is a lift
\cite[Df.\,5.35]{Doets} of~$\Gamma$.  
$\Gamma$~consists of ground queries   $R_1, R_2, R_3$, where
\[
\begin{array}{l}
R_1 = Q_{i-1}\sigma_{i-1} = (A,Q')\sigma_{i-1}, \\
R_2 = ( p(\seq[k-1]A),!,A_k,\ldots,A_n)\psi,Q'\sigma_{i-1},  \\
R_3 = ( A_k,\ldots,A_n)\psi,Q'\sigma_{i-1}. \\
\end{array}
\]

So $\Gamma$ is an unrestricted derivation, where $C',C''$ are the applied
clauses.  Hence $\Delta$ is a lift of  $\Gamma$.
By the lifting theorem  \cite[Th.\,5.37]{Doets}, the resultant
$R_3\to R_1$ of $\Gamma$ is an instance of the resultant 
$Q_j\to  Q_{i-1}\theta_i\phi$ of $\Delta$.
    The latter is also the resultant of the original derivation 
    $Q_{i-1},Q_i,Q_{i+1}\ldots,Q_j$ for $P$.
Note that  $S\models R_3$.
So $R_3$ is the required ground instance of $Q_j$.
This completes the proof of the inductive step.

\medskip
By induction we obtain that in $T$ there exists a successful (as $T$ is finite)
branch  $Q,\ldots,\Box$ with a subsequence of nodes
 $\Delta'=Q, Q_{j_1}\ldots,Q_{j_l},\Box$ (where $0<j_1<\cdots<j_l$)
which is a lift of a ground successful unrestricted derivation starting with
$Q\theta$ (for some superset of the program $P$).
The resultants (i.e.\ the answers) for both successful derivations are the
same.  
Hence, by the lifting theorem  \cite[Th.\,5.37]{Doets}
$Q\theta$ is an instance of the answer of $\Delta'$, hence of an answer of $T$.
\hfill $\Box$ \quad \mbox{}
\end{proof}

It remains to show that if condition 
 \ref{def:adjustably:tough:condition} of Df.\,\ref{def:adjustably}
holds for an $H\rho$ then it holds for all its instances (for which $A$ is an
instance) 

\renewcommand\AA{{\ensuremath{\overrightarrow A}}\xspace}

\begin{lemma}
\label{lemma:instances-derivations}
Let 
$C$ be a clause $H\gets \seq[k-1]A,!,A_k,\ldots,A_n$ ($0\leq k\leq n$).
Let \AA be $\seq[{k-1}]A$.  
Let $A$, $\rho,\ \eta$, $H\rho $ i $\AA\rho\eta$ be as in condition 
 \ref{def:adjustably:tough:condition} of Df.\,\ref{def:adjustably}.
The following conditions (1) and (2) are equivalent.

\medskip\noindent
(1) 
$A$ is covered by 
$(H \gets A_k,\ldots,A_n)\rho\eta$ w.r.t.\ $S$.

\medskip\noindent
(2) There exists a successful LD-derivation for $A$ using in its consecutive
steps the clauses $C$,  $ A_1\rho\eta,\ldots,A_{k-1}\rho\eta$, and then
some atoms from $S$.
\end{lemma}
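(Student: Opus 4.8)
The plan is to reduce both implications to one observation: resolving the query $A$ with $C$ and then peeling off $A_1,\ldots,A_{k-1}$ against the ground unit clauses $A_1\rho\eta,\ldots,A_{k-1}\rho\eta$ has the same effect as a single resolution step of $A$ with the cut-removed tail clause $D:=(H\gets A_k,\ldots,A_n)\rho\eta$, so that (2) becomes ``the resolvent of $A$ against $D$ has a successful LD-derivation whose input clauses are all facts from $S$'', which in turn is equivalent to ``$A$ is covered by $D$ w.r.t.\ $S$'', i.e.\ (1). Two easy facts are used throughout: (a) in a single LD-derivation the cut in $C$ is merely control, so the symbol $!$ is just stepped over and can be ignored; and (b) a query $\vec{G}$ has a successful LD-derivation all of whose input clauses are unit clauses belonging to $S$ iff some ground instance of $\vec{G}$ consists of atoms of $S$ (forward: compose the mgus; backward: the ground instance tells which facts to use). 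Note also that, $A$ being ground, any mgu $\theta$ of $A$ with a fresh variant of $H$ satisfies $H\theta=A$ and, since $A$ is an instance of $H\rho$, agrees on $\vars(H)$ with $\rho\delta$ for the grounding $\delta$ witnessing that $A$ is an instance of $H\rho$; all mgus are taken idempotent and relevant.

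For $(1)\Rightarrow(2)$ I would fix a ground $\sigma'$ with $A=H\rho\eta\sigma'$ and $A_j\rho\eta\sigma'\in S$ for $j=k,\ldots,n$ (this is the meaning of ``$A$ covered by $D$ w.r.t.\ $S$''; as $A_1\rho\eta,\ldots,A_{k-1}\rho\eta$ are already ground, $\sigma'$ leaves them fixed). Resolving $A$ with a variant of $C$ gives the resolvent $(A_1,\ldots,A_{k-1},!,A_k,\ldots,A_n)\theta$, which by the mgu lemma has $(A_1,\ldots,A_{k-1},A_k,\ldots,A_n)\rho\eta\sigma'$ as a ground instance. Peeling off $A_1\theta,\ldots,A_{k-1}\theta$ one at a time against the ground facts $A_1\rho\eta,\ldots,A_{k-1}\rho\eta$ works by a short induction whose invariant is that the current query still has $(A_j,\ldots,A_n)\rho\eta\sigma'$ as a ground instance; after stepping over $!$ the query is $(A_k,\ldots,A_n)\mu$ with ground instance $(A_k,\ldots,A_n)\rho\eta\sigma'$, all of whose atoms lie in $S$, and resolving each atom with the corresponding fact of $S$ completes a derivation of the form required by~(2). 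For $(2)\Rightarrow(1)$ I would run this backwards: given such a derivation, compose all of its mgus to obtain a ground derivation, hence a ground substitution $\gamma$ on $\vars(C)$ with $H\gamma=A$, with $A_j\gamma=A_j\rho\eta$ for $j<k$, and with $A_j\gamma=B_j\in S$ for $j\geq k$, where $B_k,\ldots,B_n$ are the $S$-facts used. It then remains to produce $\sigma'$ with $(H\gets A_k,\ldots,A_n)\rho\eta\sigma'=(A\gets B_k,\ldots,B_n)$: since $\gamma$ agrees with $\rho\eta$ on $\vars(A_1,\ldots,A_{k-1})$ and, because $A$ is an instance of $H\rho$, also agrees on $\vars(H)$ with a grounding through $\rho$, one takes $\sigma'$ to be the restriction of $\gamma$ to the variables of $(H\gets A_k,\ldots,A_n)\rho\eta$, checking via the conditions $\mathit{dom}(\rho)\subseteq\vars(H)$, $\mathit{rng}(\rho)\cap\vars(C)\subseteq\vars(H)$, $\mathit{dom}(\rho)\cap\mathit{rng}(\rho)=\emptyset$, $\mathit{dom}(\eta)=\vars((A_1,\ldots,A_{k-1})\rho)$ that no variable is forced to two inconsistent values. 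Then $(A\gets B_k,\ldots,B_n)$ is a ground instance of $D$ with head $A$ and all body atoms in $S$, i.e.\ (1).

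The genuinely delicate step is the last one of $(2)\Rightarrow(1)$: extracting $\sigma'$ from the accumulated substitution and verifying, using idempotence and relevance of the mgus together with the stated conditions on the domains and ranges of $\rho$ and $\eta$, that $\rho\eta\sigma'$ really does reproduce the head $A$ and the body atoms $B_k,\ldots,B_n$. Everything else is the mgu lemma, the transparency of the cut in a single derivation, and the routine equivalence (b) above.
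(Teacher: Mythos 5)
Your proposal is correct and follows essentially the same route as the paper: for $(1)\Rightarrow(2)$ a ground derivation through $C\rho\eta\sigma$ is built and lifted, and for $(2)\Rightarrow(1)$ the (necessarily ground) mgu's $\theta_1,\ldots,\theta_{n+1}$ are composed and the composition is shown to factor through $\rho\eta$, using exactly the stated domain/range conditions. The one point to tighten is your definition of $\sigma'$ as ``the restriction of $\gamma$'': since $rng(\rho)$ may contain variables outside $vars(C)$, the substitution $\gamma=\theta_1\cdots\theta_{n+1}$ is undefined on part of $vars(H\rho\eta)$, so $\sigma'$ must also incorporate the grounding $\delta$ with $A=H\rho\delta$ --- the paper takes $\sigma=(\delta\,\theta_2\cdots\theta_{n+1})\setminus\eta$, after first establishing $C\theta_1=C\rho\delta$ from $rng(\rho)\cap vars(C)\subseteq vars(H)$.
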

Note that in (2) all the clauses used in the derivation, except $C$, are ground.

\begin{proof}
\mbox{(1) $\Rightarrow$ (2)}:
\hspace{0pt plus .5em}%
(1) implies that 
 $A$ is covered by a ground clause $(H \gets A_k,\ldots,A_n)\rho\eta\sigma$.
Construct
an LD-derivation $D$ for $A$, using first clause $C\rho\eta\sigma$ and then the
clauses as in (2).  Its lift is a required derivation.
{\sloppy\par}

\medskip\noindent
(2) $\Rightarrow$ (1):
Take a derivation as in (2):
\[
\begin{array}l
  A	\\
  (\seq A)\theta_1	\\
  \ldots		\\
  (\SEQ A {k} n)\theta_1\cdots\theta_k	\\  
  \ldots		\\
  A_n\theta_1\cdots\theta_n	\\  
  \Box
\end{array}
\]
with mgu's \seq[n+1]\theta, which are ground substitutions.
We have
$A=H\theta_1 = H \SEQC \theta 1 {n+1}$, 
and the ground clauses used in the derivation are
$A_i\SEQC\theta 1 {i+1} = A_i\SEQC \theta 1 {n+1}$  ($i=1,\ldots,n$). 
Then  $\AA \rho\eta = \AA \SEQC \theta 1 {n+1}$    and 
$A_i\SEQC \theta 1 {n+1}\in S$ for  $i=k,\ldots,n$.

Now $A = H\rho\delta$  for some ground substitution $\delta$ with 
$do m(\delta)=vars(H\rho)$. 
So 
$\theta_1=\restrict{(\rho\delta)}{vars(H)}$, as $do m(\theta_1)=vars(H)$.
Note that $do m(\delta) \cap vars(C)\subseteq vars(H)$
(as $r n g(\rho) \cap vars(C)\subseteq vars(H)$).
Hence
$\theta_1=\restrict{(\rho\delta)}{vars(H)} = 
\restrict{(\rho\delta)}{{\it vars}(C)}$, and thus
$C\theta_1 = C\rho\delta$.
In particular, $\AA\theta_1 = \AA\rho\delta$.
So  $\AA\rho\eta 
 = \AA\SEQC\theta1{n+1}
 = \AA\rho\delta \SEQC\theta2{n+1} 
$.
Thus
$\eta
      = \restrict{(\delta \SEQC \theta 2 {n+1})}{\AA\rho}
$
(as $do m(\eta)=vars(\AA\rho)$).

Let
$\sigma = {(\delta \SEQC \theta 2{n+1})} \setminus \eta$.
As $\delta$ and $\sigma$ are ground and with disjoint domains,
$\delta \SEQC \theta 2{n+1} = \eta\cup\sigma = \eta\sigma$.
Hence
$C\SEQC \theta 1{n+1}= C\rho\delta\SEQC \theta 2{n+1} = C\rho\eta\sigma$
(as $C\theta_1 = C\rho\delta$).
So
$H\rho\eta\sigma= H\SEQC \theta 1{n+1} = A$ and 
$A_i\rho\eta\sigma = A_i\SEQC \theta 1{n+1}\in S$, for $i=k,\ldots,n$.
Hence
$A$ is covered by $(H \gets A_k,\ldots,A_n)\rho\eta$ w.r.t.\ $S$.
\hfill $\Box$
\end{proof}

\begin{lemma}
\label{lemma:instances:adjustably:tough:condition}
If condition \ref{def:adjustably:tough:condition} of Df.\,\ref{def:adjustably}
holds for an atom $H\rho\in pre$ then it holds for any its instance 
$H\rho'$ such that $A$ is an instance of $H\rho'$, and $\rho'$ satisfies the
requirements of condition  \ref{def:adjustably:tough:condition}
(i.e.\
 $do m(\rho')\subseteq vars(H)$, $r n g(\rho') \cap vars(C) \subseteq vars(H)$, 
 $do m(\rho')\cap r n g(\rho') = \emptyset$).
\end{lemma}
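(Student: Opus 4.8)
The plan is to reduce condition~\ref{def:adjustably:tough:condition} of Df.~\ref{def:adjustably} for $H\rho'$ to the very same condition for $H\rho$: given a candidate substitution $\eta'$ for $\rho'$ I would pull it back to a candidate $\eta$ for $\rho$, invoke the hypothesis, and then push the resulting covering statement forward again with the help of Lemma~\ref{lemma:instances-derivations}. The reason Lemma~\ref{lemma:instances-derivations} is needed is that the pull-back will only control the first $k-1$ body atoms $\seq[{k-1}]A$ of the clause $C$, whereas condition~\ref{def:adjustably:tough:condition} talks about the whole clause $(H\gets A_k,\ldots,A_n)\rho\eta$; Lemma~\ref{lemma:instances-derivations} re-expresses ``$A$ is covered by $(H\gets A_k,\ldots,A_n)\rho\eta$ w.r.t.\ $S$'' as the existence of a successful LD-derivation for $A$ that uses $C$, then the ground clauses $A_1\rho\eta,\ldots,A_{k-1}\rho\eta$, then atoms of $S$, and this last portion does not depend on $\rho$ or $\eta$ at all.

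First I would fix a matcher. Since $H\rho'$ is an instance of $H\rho$ and $do m(\rho),do m(\rho')\subseteq\vars(H)$, there is a substitution $\gamma$ with $X\rho\gamma=X\rho'$ for every $X\in\vars(H)$; I would take $\gamma$ to be the identity outside $\vars(H\rho)$. Using $r n g(\rho)\cap\vars(C)\subseteq\vars(H)$ and $do m(\rho')\subseteq\vars(H)$ one checks that every variable occurring in $\seq[{k-1}]A$ but not in $H$ lies neither in $\vars(H\rho)$ nor in $do m(\rho')$, hence is moved by neither $\gamma$ nor $\rho'$; together with $X\rho\gamma=X\rho'$ on $\vars(H)$ this gives $(\seq[{k-1}]A)\rho\gamma=(\seq[{k-1}]A)\rho'$. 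I would also record that, by this equality, for each variable $V$ of $(\seq[{k-1}]A)\rho$ the term $V\gamma$ occurs in $(\seq[{k-1}]A)\rho'$, so $\vars(V\gamma)\subseteq\vars((\seq[{k-1}]A)\rho')$.

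Next I would take an arbitrary ground $\eta'$ with $do m(\eta')=\vars((\seq[{k-1}]A)\rho')$ and $A_i\rho'\eta'\in post$ for $i=1,\ldots,k-1$ --- i.e.\ an arbitrary instance of the kind over which condition~\ref{def:adjustably:tough:condition} for $H\rho'$ quantifies --- and set $\eta:=\restrict{\gamma\eta'}{\vars((\seq[{k-1}]A)\rho)}$. By the last observation $\gamma\eta'$ is ground on $\vars((\seq[{k-1}]A)\rho)$, so $\eta$ is a ground substitution with $do m(\eta)=\vars((\seq[{k-1}]A)\rho)$ and $(\seq[{k-1}]A)\rho\eta=(\seq[{k-1}]A)\rho\gamma\eta'=(\seq[{k-1}]A)\rho'\eta'$; in particular $A_i\rho\eta=A_i\rho'\eta'\in post$. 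Since $A$ is an instance of $H\rho'$, hence of $H\rho$, condition~\ref{def:adjustably:tough:condition} for $H\rho$ applied to this $\eta$ yields that $A$ is covered by $(H\gets A_k,\ldots,A_n)\rho\eta$ w.r.t.\ $S$. By Lemma~\ref{lemma:instances-derivations} there is then a successful LD-derivation for $A$ using $C$, then $A_1\rho\eta,\ldots,A_{k-1}\rho\eta$, then atoms of $S$; and since $A_i\rho\eta=A_i\rho'\eta'$ for $i=1,\ldots,k-1$, this is at the same time a successful LD-derivation for $A$ using $C$, then $A_1\rho'\eta',\ldots,A_{k-1}\rho'\eta'$, then atoms of $S$. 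So condition~(2) of Lemma~\ref{lemma:instances-derivations} holds for $\rho',\eta'$, and applying that lemma in the other direction gives that $A$ is covered by $(H\gets A_k,\ldots,A_n)\rho'\eta'$ w.r.t.\ $S$. As $\eta'$ was arbitrary, condition~\ref{def:adjustably:tough:condition} holds for $H\rho'$.

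I expect the main obstacle to be the first step: producing $\gamma$ and keeping track of exactly which variables of $C$ it may and may not move, so that $(\seq[{k-1}]A)\rho\gamma=(\seq[{k-1}]A)\rho'$ holds and so that $\restrict{\gamma\eta'}{\vars((\seq[{k-1}]A)\rho)}$ is genuinely an admissible $\eta$ for condition~\ref{def:adjustably:tough:condition} applied to $H\rho$ --- i.e.\ ground, with domain exactly $\vars((\seq[{k-1}]A)\rho)$. Once that is set up, what remains is routine variable bookkeeping together with the two appeals to Lemma~\ref{lemma:instances-derivations}.
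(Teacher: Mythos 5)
Your proposal is correct and follows essentially the same route as the paper's proof: your matcher $\gamma$ is the paper's $\delta$, the case analysis on variables of $C$ inside versus outside $\vars(H)$ establishing $(\seq[{k-1}]A)\rho\gamma=(\seq[{k-1}]A)\rho'$ is the same key step, and the back-and-forth application of Lemma~\ref{lemma:instances-derivations} to transfer the covering from $(H\gets A_k,\ldots,A_n)\rho\eta$ to $(H\gets A_k,\ldots,A_n)\rho'\eta'$ is exactly the paper's argument.
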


\begin{proof}
Let \AA be $\seq[{k-1}]A$.  
We first show that $\AA\rho'$ is an instance of $\AA\rho$.
For some~$\delta$ with $do m(\delta)\subseteq vars(H\rho)$, we have
$H\rho'=H\rho\delta$, so $\rho'=\restrict{(\rho\delta)}{H}$.
Consider a variable $X$ from $C$. There are two cases:
\\
1.  $X\in vars(H)$, thus  $X\rho'=X\rho\delta$. \
\\
2.  $X\not\in vars(H)$. So $X\not\in do m(\rho)$, 
as $do m(\rho) \subseteq vars(H)$.
From $do m(\delta)\subseteq vars(H)\cup r n g(\rho)$ it follows that
 $do m(\delta)\cap vars(C)\subseteq vars(H)$
(as $r n g(\rho) \cap vars(C) \subseteq vars(H)$).
So $X\not\in do m(\delta)$.  Hence $X\rho\delta=X$ and $X\rho'=X\rho\delta$.

   We showed that $\rho' = \restrict{(\rho\delta)}C$.
So $\AA\rho'=\AA\rho\delta $.
Then each ground instance $\AA\rho'\eta'$ of $\AA\rho'$  
such that $ A_1\rho'\eta',\ldots,A_{k-1}\rho'\eta'\in post$
is an instance of $\AA\rho$
($\AA\rho'\eta'= \AA\rho\delta\eta'= \AA\rho\eta$ where 
 $\eta=(\restrict{\delta}{ \mbox{\scriptsize$\AA$}\rho})\eta'$).

Assume that condition  \ref{def:adjustably:tough:condition} holds for $H\rho$.
Then for each ground instance as above where each atom of 
$\AA\rho\eta$ is in $post$,
atom $A$ is covered w.r.t.\ $S$ by  $(H \gets A_k,\ldots,A_n)\rho\eta$.
By Lemma \ref{lemma:instances-derivations} 
there exists a successful LD-derivation for $A$ using in its consecutive
steps the clauses $C$,  $ A_1\rho\eta,\ldots,A_{k-1}\rho\eta$, and then
some atoms from $S$. As $ A_i\rho\eta= A_i\rho'\eta'$ for $i=1,\ldots,k-1$,
by Lemma \ref{lemma:instances-derivations} used in the opposite direction,
$A$ is covered by  $(H \gets A_k,\ldots,A_n)\rho'\eta'$.
\hfill $\Box$
\end{proof}

\bibliographystyle{plain}
\bibliography{bibshorter,bibpearl,bibmagic}
\pdfbookmark[1]{References}{bookmark:bib}

\end{document}